\def\0{{\mathbf{0}}}
\def\1{{\mathbf{1}}}
\def\lmat{\left(\begin{matrix}}
\def\rmat{\end{matrix}\right)}
\def\eqref#1{(\ref{#1})}
\newtheorem{theorem}{Theorem}
\newtheorem{lemma}{Lemma}
\newtheorem{corollary}{Corollary}
\DeclareMathOperator*{\argmin}{\arg\!\min}
\DeclareMathOperator*{\argmax}{\arg\!\max}
\begin{document}

%
\title{A Provably Efficient Online Collaborative Caching Algorithm for Multicell-Coordinated Systems}
%
%
%
%

\author{Ammar Gharaibeh,~\IEEEmembership{Student Member,~IEEE,} Abdallah Khreishah,~\IEEEmembership{Member,~IEEE,}
Bo Ji,~\IEEEmembership{Member,~IEEE}
and~Moussa Ayyash,~\IEEEmembership{Senior,~IEEE}
\IEEEcompsocitemizethanks{\IEEEcompsocthanksitem Ammar Gharaibeh and Abdallah Khreishah are with the Department of Electrical and Computer Engineering, New Jersey Institute of Technology, Newark, NJ 07102 USA, Bo Ji is with the Department of Computer and Information Sciences, Temple University, Philadelphia, PA, and Moussa Ayyash is with the Department of Information Studies, Chicago State University, IL.
\protect\\
E-mail: \{amg54,abdallah\}@njit.edu, boji@temple.edu, and mayyash@csu.edu.
}

\thanks{This research was supported in part by NSF grant ECCS 1331018.}
}

\IEEEcompsoctitleabstractindextext{
\begin{abstract}
Caching at the base stations brings the contents closer to the users, reduces the traffic through the backhaul links, and reduces the delay experienced by the cellular users. The cellular network operator may charge the content providers for caching their contents. Moreover, content providers may lose their users if the users are not getting their desired quality of service, such as maximum tolerable delay in Video on Demand services. In this paper, we study the collaborative caching problem for a multicell-coordinated system from the point of view of minimizing the total cost paid by the content providers. We formulate the problem as an Integer Linear Program and prove its NP-completeness. We also provide an online caching algorithm that does not require any knowledge about the contents popularities. We prove that the online algorithm achieves a competitive ratio of $\mathcal{O}(\log(n))$, and we show that the best competitive ratio that any online algorithm can achieve is $\Omega(\frac{\log(n)}{\log\log(n)})$. Therefore, our proposed caching algorithm is provably efficient. Through simulations, we show that our online algorithm performs very close to the optimal offline collaborative scheme, and can outperform it when contents popularities are not properly estimated.
\end{abstract}

\begin{IEEEkeywords}
Multicell-coordinated systems, collaborative caching, cellular networks, online algorithm, competitive ratio.
\end{IEEEkeywords}}

\maketitle

\IEEEdisplaynotcompsoctitleabstractindextext

%
\IEEEpeerreviewmaketitle

\section{Introduction}\label{sec:Introduction}
Recently, content delivery has dominated the Internet traffic. Services like Video on Demand accounts for 54\% of the total Internet traffic, and the ratio is expected to grow to 71\% by the end of 2019 \cite{Cisco2014}. This expected increase motivated changes to the operations of cellular networks, as the current infrastructure cannot cope with this increase.

One way to handle the above challenge is to introduce caching at the base stations. Caching at the base stations can reduce the data traffic going through the backhaul links, reduce the time required for content delivery, and help in smoothing the traffic during peak hours. Thus, providing good caching techniques is of high importance. Note that the price of data storage devices is dramatically decreasing year by year.

Caching in general has been extensively studied. A few examples are those of \cite{Khreishah2015joint,Khreishah2014renewable,Gharaibeh2014asymptotically}. The works in \cite{ong2014fgpc,wang2013optimal,lee2013cache} study caching in Content Centric Networks \cite{jacobson2009networking} which has different settings than the multicell-coordinated system we consider in this paper. Caching in cellular networks has been studied under different settings and for different objectives. The objective of the work in \cite{blasco2014learning} is to minimize the delay through content popularity estimation in a single base station, while \cite{golrezaei2012femtocaching} uses caching helpers to achieve the same objective. The authors in \cite{pei2011cache,ahlehagh2012video,ostovari2013cache} consider hierarchical caching in cellular backhaul networks, while \cite{karamchandani2014hierarchical,maddah2013fundamental} take an information-theoretic approach to study hierarchical caching.

All of the above-mentioned studies do not take advantage of the backhaul links connecting the base stations, by not allowing the base stations to collaborate. When a base station receives a content request, it can retrieve a copy of the content from another base station that cached the content via the backhaul links, instead of retrieving it from the original Content Provider via the Internet. Therefore, collaboration among the base stations reduces the operational cost of the cellular network and enhances its performance. This will motivate the cellular network operator to deploy collaborative caching.

Collaborative caching has only been considered recently by proposing proactive caching schemes under offline settings. By proactive caching we mean that the caching decisions are made before the appearance of any request for any content, and by offline settings we mean that the caching scheme knows the exact popularities of the contents (i.e. the exact number of requests that will be made for every content). The fact that the exact contents popularities are hard to estimate accurately motivates us to propose a reactive caching algorithm under online settings. By reactive caching we mean that our algorithm makes a caching decision after the appearance of a request for a content, and by online settings we mean that our algorithm does not know the exact popularities of the contents. Nevertheless, our algorithm works on a per-request basis and caching decisions made by the algorithm does not require the knowledge of the contents popularities.

Some examples of proactive collaborative caching schemes are the works of \cite{wang2014cache,khreishah2015collaborative,wang2015framework,pantisano2014network}. The authors in \cite{wang2014cache} propose a proactive offline caching scheme, and develop a heuristic algorithm to minimize the content access delay of all users. The authors assume that content popularity is the same across different base stations. The authors in \cite{khreishah2015collaborative} propose an offline caching scheme to maximize the reward gained by the cellular network operator when the cache of each base station is limited. The work in \cite{wang2015framework} studies collaborative caching with the objective of minimizing either the inter ISP traffic, the intra ISP traffic, or the overall user delays, using a proactive offline caching scheme and a heuristic algorithm.

The differences between our work and the works of \cite{wang2014cache,khreishah2015collaborative,wang2015framework} are three folds:
\begin{itemize}
\item The works in \cite{wang2014cache,khreishah2015collaborative,wang2015framework}  all require the information of contents popularities, which may be hard to estimate accurately, while our algorithm does not need this information.
\item Based on the popularity information, the works in \cite{wang2014cache,khreishah2015collaborative,wang2015framework} all solve a static optimization problem. In contrast, our scheme works on a per-request basis where the requests for contents are revealed one by one, and the online algorithm has to make a decision based on the number of requests seen so far by the online algorithm.
\item Although the work in [16] proposes an online algorithm based on CCN caching, the algorithm is a simple Least Recently Used (LRU) caching algorithm and the authors do not provide any theoretical proof of the algorithm's performance.
\end{itemize}

In \cite{pantisano2014network}, the authors study collaborative caching among small base stations deployed in a single macro-cell. The objective of the study is to minimize the cellular network operational cost, given the cache size at each small base station and the bandwidth of the backhaul links. The authors also propose a combined offline algorithm and Least Frequently Used (LFU) replacement policy for in-network cache management, and prove that the ratio of the performance of the offline algorithm to the optimal offline algorithm is within a factor of $\beta$, which is linear in terms of the product of the number of potential collaborators, the number of requests for each user, and the number of cached contents at each small base station. Our work is different in that we propose an online caching algorithm that achieves a better performance ratio when compared to the optimal offline algorithm.

We consider collaborative caching at the base stations from a different perspective. Our objective is to minimize the overall cost paid by the Content Providers (CPs). We assume that caching a content at a base station incurs two types of costs. The first type is the storage cost, where CPs have to pay to the Cellular Network Operator (CNO) in exchange for caching their contents. This is motivated by the increasing trend of using in-network cloudlets, services, and middleboxes, in which the storage and computations are performed at small clouds installed in the routers or the base stations of the network \cite{chen2013packetcloud,satyanarayanan2009case,fesehaye2012impact,sherry2012making}. Moreover, the caching costs paid by the CPs motivate the CNOs to perform caching by providing them with an extra source of income.

The second type of cost is what we call User Attrition (UA) cost \cite{Wiki}. This cost represents the expected cost of losing users that are switching to other CPs because the users are not getting their desired Quality of Service (QoS). This is caused by the fact that the requested content is cached far away from the users. For example, users who are experiencing high delays when streaming a video from one CP may switch to another CP, which causes losses for the former CP. These two types of costs yield a tradeoff on where the CPs choose to cache their contents in order to minimize the total cost. In this paper, we formulate the problem of caching in a multicell-coordinated system as an optimization problem that minimizes the overall cost paid by the CPs, while satisfying the users' demands.

In the formulated optimization problem, we assume the exact knowledge of the contents popularities. Based on this knowledge, a proactive offline algorithm for collaborative caching can achieve the optimal solution, similar to \cite{wang2014cache,khreishah2015collaborative,wang2015framework,pantisano2014network}. Since in real life scenarios this knowledge is unavailable, an online algorithm is needed for caching at the base stations. In the online algorithm, a decision for caching at a base station is made when a content is requested, and the caching decision cannot be changed in the future because the CP has already paid the caching cost. To measure the performance of the online algorithm, we use the concept of competitive ratio. The competitive ratio is the ratio of the performance of the online algorithm to the performance of the optimal offline algorithm. In this paper, we present an online algorithm with a competitive ratio of $\mathcal{O}(\log(n))$, where $n$ is the total number of requests in the cellular network. The competitive ratio we obtain is close to the lower bound of $\Omega(\frac{\log (n)}{\log\log (n)})$ which we prove in Section \ref{subsec:LowerBound}, since $\log\log (n)$ is small even when $n$ is large (i.e. $\log\log (10^6) \approx 4.3$).

Specifically, we make the following contributions:
\begin{itemize}
\item We formulate the problem of collaborative caching in multicell-coordinated systems as an Integer Linear Program (ILP), aiming to minimize the overall cost paid by the CPs and we prove that the problem is NP-complete.

\item We provide an online algorithm for collaborative caching in multicell-coordinated system that does not require the knowledge of the contents popularities. We prove that the competitive ratio of the online algorithm is $\mathcal{O}(\log(n))$.
\item We prove that the best competitive ratio that any online algorithm can achieve is lower bounded by $\Omega(\frac{\log (n)}{\log\log (n)})$. 
\item We compare the performance of the offline collaborative caching and the online collaborative caching through extensive simulations. Simulation results reveal that the online algorithm can tolerate inaccuracies in measuring the contents popularities, which is not provided by any of the prior work \cite{Khreishah2015joint,Khreishah2014renewable,Gharaibeh2014asymptotically,golrezaei2012femtocaching,ostovari2013cache,maddah2013fundamental,wang2014cache,khreishah2015collaborative,wang2015framework,pantisano2014network}.
\end{itemize}

The rest of the paper is organized as follows: In Section \ref{sec:settings} we specify our settings. In Section \ref{sec:Formulation} we present the ILP problem formulation and prove its NP-completeness. In Section \ref{sec:Online} we present the online algorithm followed by the proof of its $\mathcal{O}(\log(n))$ competitive ratio. We also prove that the competitive ratio of any online algorithm is lower bounded by $\Omega(\frac{\log (n)}{\log\log (n)})$. Section \ref{sec:Simulation} presents our simulation results. We finally conclude the paper in Section \ref{sec:Conclusion}.

\section{Settings} \label{sec:settings}
We consider a cellular network consisting of a set $\mathcal{K} = \{1,2,\ldots,k,\ldots,K\}$ of cache-capable base stations connected to each other via backhaul links. The backhaul links also serve as a connection to the Internet through the cellular system gateway. In the rest of the paper, we use the words base station and cache interchangeably. Let $K+1$ denote the index of the contents providers servers located in the Internet. Fig. \ref{fig:model} provides an example of our system model. We have $\mathcal{M} = \{1,2,\ldots,j,\ldots,M\}$ contents with sizes $\mathcal{S} = \{s_1, s_2, \ldots, s_j, \ldots, s_M\}$ that can be requested by the users connected to the base stations. Let $\gamma_{ij}$ denote the number of requests for the $j$-th content generated by users in the $i$-th base station.

Due to the dramatic decrease in data storage prices, we assume that the cache capacity of each base station is unlimited. However, there is a unit cost $f_{kj}$ associated with caching the $j$-th content at the $k$-th base station. Having a caching cost will limit the number of contents cached at a base station. Moreover, as explained in Section~\ref{sec:Introduction}, there is an increasing trend of using in-network cloudlets in which the base station itself becomes a small cloud. Also, due to the fast development and cost reduction of storage devices, the cache size can be very large with low cost.

Let $T_{ij}^{k}$ denote the UA cost associated when the $i$-th base station retrieves the $j$-th content from the $k$-th base station, and let $T_{ij}^{K+1}$ denote the UA cost associated when the $i$-th base station retrieves the $j$-th content from the Internet. If we associate a cost between any two directly connected base stations, then for any two base stations $i, k$, $T_{ij}^{k}$ can be computed using the minimum cost path between $i$ and $k$, and thus satisfies the triangle inequality (\emph{i.e.} $T_{ij}^k \leq T_{ij}^{k'} + T_{k'j}^k$).

\begin{figure*}
\centering
\includegraphics[scale = 0.5]{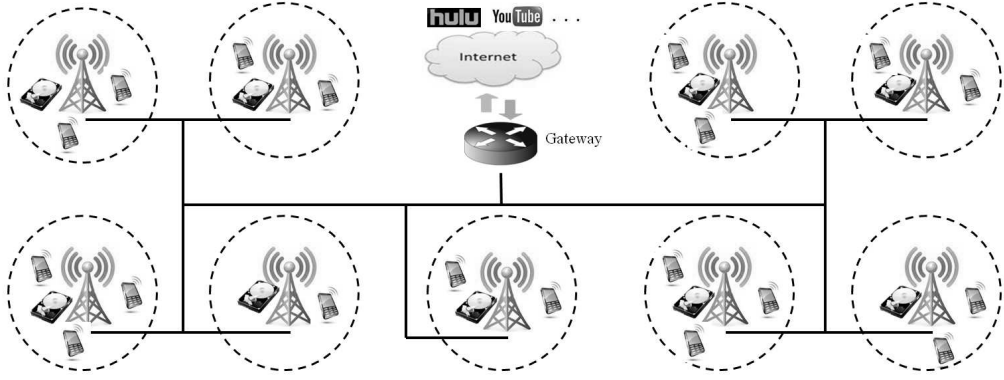}
\caption{Collaborative multicell-coordinated system.}
\label{fig:model}
\end{figure*}

Our objective is to find a caching setup that minimizes the aggregated caching and UA costs while satisfying the users' demands. We introduce the formulation of the optimization problem in the next section.

\section{Problem Formulation}\label{sec:Formulation}
In this section, we formulate the problem of collaborative multicell-coordinated system, followed by the proof of the problem's NP-completeness.

\subsection{Collaborative Case Formulation}\label{subsec:Collaborative}

Before presenting our formulation, we introduce the following variables:

$Y_{kj}  =\left\{ \begin{array}{rl}
1 &\mbox{if the $j$-th content is cached}\\
  &\mbox{at the $k$-th base station.}\\
0 &\mbox{otherwise.}
\end{array}\right.$

$X_{ij}^{k} = \left\{ \begin{array}{rl}
1 &\mbox{if the $i$-th base station retrieves the }\\
  &\mbox{$j$-th content from the $k$-th base station.} \\
0 &\mbox{otherwise.}
\end{array}\right.$

$X_{ij}^{K+1} = \left\{ \begin{array}{rl}
1 &\mbox{if the $i$-th base station retrieves the }\\
  &\mbox{$j$-th content from the Internet.} \\
0 &\mbox{otherwise.}
\end{array}\right.$

We formulate the problem as the following Integer Linear Program (ILP):

\begin{displaymath} \min \sum_{i=1}^{K}\sum_{k = 1}^{K+1}\sum_{j = 1}^{M} T_{ij}^{k}X_{ij}^{k}\gamma_{ij}s_j + \sum_{i = 1}^{K}\sum_{j = 1}^M f_{kj}Y_{kj}s_j \label{eqn:basic_formulation}
\end{displaymath}

Subject to

\begin{align}
& X_{ij}^{k} \leq Y_{kj}, \quad \forall i, j, k \label{const1}\\
& \sum_{k = 1}^{K+1} X_{ij}^{k} \geq 1_{\{\gamma_{ij} > 0\}} \quad \forall i, j \label{const2}
\end{align}

In the objective function of the above problem, the first term is the total UA cost, and the second term is the total caching cost. The first set of constraints ensures that a content can be retrieved from a base station only if the content is in the cache of that base station. The second set of constraints ensures that if a content is requested, then the content is served either from the cache of the local base station, the cache of a neighboring base station, or from the Internet.

In the following section, we present the proof of the problem's NP-completeness.

\subsection{NP-Completeness Proof}\label{subsec:NPcompleteProof}
In this section, we show that the ILP optimization problem presented in the previous section is NP-complete by proving the following theorem:

\begin{theorem}\label{theorem:theorem1}
The ILP optimization problem is NP-complete.
\end{theorem}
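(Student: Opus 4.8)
The plan is to prove NP-completeness in the two standard pieces: membership in NP for the decision version of the program, and NP-hardness by a polynomial-time reduction from a classical NP-complete problem. First I would state the decision version explicitly: given a budget $C$, does there exist a $0$--$1$ assignment to the variables $\{X_{ij}^{k}\}$ and $\{Y_{kj}\}$ that satisfies constraints \eqref{const1} and \eqref{const2} and whose objective value is at most $C$? Membership in NP is then immediate, since the number of variables is $\mathcal{O}(K^2 M)$, so a guessed assignment is a polynomial-size certificate, and verifying feasibility together with evaluating the objective (all the $T_{ij}^k$, $f_{kj}$, $\gamma_{ij}$, $s_j$ being given numbers) takes polynomial time.

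For NP-hardness I would reduce from Set Cover. Given a universe $U=\{u_1,\dots,u_n\}$ and sets $S_1,\dots,S_m\subseteq U$, I build a caching instance with a single content ($M=1$, unit size, unit demand). I introduce one \emph{element} base station $e_i$ for each $u_i$, each carrying a request ($\gamma_{e_i}=1>0$), and one \emph{set} base station $c_t$ for each $S_t$. I set the caching cost to $f_{c_t}=1$ at every set station and to a prohibitively large value $B=m+1$ at every element station, so no optimal solution ever caches at an element station. I make the retrieval cost $T_{e_i}^{c_t}=0$ when $u_i\in S_t$ and $T_{e_i}^{c_t}=B$ otherwise, and I set the Internet cost $T_{e_i}^{K+1}=B$. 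The threshold $B$ is chosen to exceed the maximum cost $m$ of any cover-induced solution, which is what forces the correspondence below.

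Next I would verify the equivalence. Any set cover $\mathcal{C}$ induces the caching solution that sets $Y_{c_t}=1$ exactly for $S_t\in\mathcal{C}$ and routes each request $e_i$ to some chosen $c_t$ with $u_i\in S_t$ at zero retrieval cost; its total cost is $|\mathcal{C}|$. Conversely, any feasible solution of cost at most $m$ cannot pay $B$ anywhere, so it caches only at set stations and uses only incidence ($u_i\in S_t$) retrievals; constraint \eqref{const2} forces every request $e_i$ to be served, and constraint \eqref{const1} then forces $Y_{c_t}=1$ for the serving station, so the cached set stations form a set cover whose size equals the solution cost. Hence a set cover of size at most $k$ exists if and only if a caching solution of cost at most $k$ exists (for $k\le m$), which establishes NP-hardness; combined with membership in NP, the decision problem is NP-complete, and the optimization problem is correspondingly NP-hard.

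The main obstacle I anticipate is the calibration of the forbidden cost $B$: it must be large enough to rule out element-station caching, non-incident retrievals, and the Internet fallback simultaneously, yet remain polynomially bounded so the reduction stays polynomial; once $B=m+1$ is fixed, the combinatorial core is routine. A secondary subtlety is that the settings of Section~\ref{sec:settings} assume the metric (triangle-inequality) property of the $T_{ij}^k$, whereas my gadget uses a sharp $\{0,B\}$ cost split. To make the hardness apply to the actual model, I would either note that the ILP as formulated in this section does not invoke the metric assumption, or replace the two-valued costs by a metric completion (assigning consistent large distances that preserve the zero incidence costs and keep every non-incidence route above $m$), after which the same correspondence goes through unchanged.
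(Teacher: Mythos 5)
Your proposal is correct and takes essentially the same route as the paper: membership in NP via polynomial-time verification of a guessed $0$--$1$ assignment, and NP-hardness via a reduction from set cover to a single-content instance in which element stations carry unit demands, incident retrievals cost $0$, and every other option is expensive; the only substantive difference is that the paper reduces from \emph{weighted} set cover with non-incidence cost $2a_k$ and repairs any solution with positive UA cost by an exchange argument, whereas you use unweighted set cover with a uniform big-$B$ threshold that forbids bad structure outright and also handle the Internet cost and the decision-version/budget formalization explicitly (both of which the paper glosses over). One caution on your closing aside: your first fallback for the triangle-inequality worry is the right one --- the ILP and the theorem never invoke the metric assumption, and indeed the paper's own $\{0,2a_k\}$ gadget is not a metric --- but your alternative of a ``metric completion preserving the zero incidence costs'' cannot work as stated, since zero distances propagate under shortest-path closure and would collapse all stations in a connected component of the zero-cost incidence graph (e.g., two sets sharing an element would sit at distance $0$, giving non-incident elements free retrieval).
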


\begin{proof}
Since we have $M(K^2 + K)$ constraints, we can easily check the feasibility of any given solution in polynomial time by checking that the set of constraints \eqref{const1}-\eqref{const2} are not violated, thus the problem is in NP.

To prove that the problem is NP-hard, we reduce the set cover problem, which is known to be an NP-complete problem, to an instance of our problem. The set cover problem is defined as follows: Given a set of elements $U = \{u_1, u_2, \ldots, u_N\}$ called the universe, and a family of subsets of the elements in the universe $B = \{b_1, b_2, \ldots, b_L\}$, where each subset $b_k$ has a cost $a_k$. The objective is to find a collection of subsets of $B$, whose union is the universe, and its total cost is minimized.

The reduction from the set cover problem to our problem is done as follows: (1) The number of contents in our problem is set to 1. (2) Each element $u_i$ in the set cover problem is mapped to a base station requesting the content in our problem. (3) The caching cost of the $k$-th base station $f_k$ is set to $a_k$. (4) The demand $\gamma_i$ is set to 1 for all base stations. (5) The UA cost $T_i^k$ is set to 0 if $u_i \in b_k$ and is set to $2a_k$ otherwise.

Note that due to the reduction from the set cover problem to our problem, all the elements in the set cover problem are covered iff the total UA cost of the solution to our problem is 0. Now we prove that there exists a solution to the set cover problem of cost no greater than $\mathcal{A}$ iff there exists a solution to our problem with a cost no greater than $\mathcal{A}$.

The first direction is easy to see. If there exists a solution to the set cover problem with cost $\mathcal{A}$, then the sets form the caches and the total cost of our problem is $\mathcal{A}$. To prove the other direction, suppose we have a solution to our problem with a cost of $\mathcal{A} = \mathcal{A}^{Caching} + UA^{total}$, where $\mathcal{A}^{Caching}$ is the total caching cost and $UA^{total}$ is the total UA cost. Then we have the following two cases:
\begin{itemize}
\item The total UA cost is equal to 0. In this case, the selected caches yield the collection of sets for the set cover problem with cost $\mathcal{A}$.
\item The total UA cost is greater than 0. This means that there are some elements in the corresponding set cover problem that are still not covered. In this case, for each base station $i$ whose incurring a non-zero UA cost, we cache the content at a new cache $k$ (i.e. select a new subset $b_k$) such that $T_i^k = 0$. Since $f_k + T_i^k = a_k + 0 \leq 2a_k$, the new total cost $\mathcal{A}^{\prime} < \mathcal{A}$. Then go back to case 1.
\end{itemize}
\end{proof}

\section{Online Algorithm}\label{sec:Online}
\subsection{Definitions}
In the online version of the problem, the users' requests for contents are revealed one by one. The online algorithm has to make a decision to satisfy the request either by caching the content in a nearby base station or by retrieving the content from the Internet or from a base station that already has the content in its cache. The algorithm's decisions cannot be changed in the future, and the decisions must be made before the next request is revealed, so the online algorithm works without the knowledge of contents popularities as opposed to the offline ILP formulation.

To compare the performance (i.e. the total cost) of the online algorithm to that of the optimal offline algorithm, we use the concept of competitive ratio. Other works have used the concept of competitive ratio, but for different problems such as online routing \cite{jaillet2008generalized} or energy efficiency \cite{albers2007energy}. We define the competitive ratio as the worst-case ratio of the performance achieved by the online algorithm to the performance achieved by the optimal offline algorithm, \emph{i.e.}, if we denote the performance of the online algorithm by $\mathcal{P}_{on}$, and the performance of the offline algorithm by $\mathcal{P}_{off}$, then the competitive ratio is:
\begin{displaymath}
\sup_{t} \sup_{\substack{all\ input\\
sequences\ in\ [0,t]}} \frac{\mathcal{P}_{on}}{\mathcal{P}_{off}}.
\end{displaymath}
As the ratio gets closer to 1, the online performance gets closer to the offline performance. In other words, the smaller the competitive ratio, the better the online algorithm's performance.

Before we present the online algorithm, we point out the following observation. Due to the assumption that the cache capacities of the base stations are unlimited, the decision of caching a content at a base station is independent from the other contents, so we can view our problem as $M$ independent caching subproblems. Even with this decomposition, the proof of NP-completeness presented in Section \ref{subsec:NPcompleteProof} still holds for every subproblem, since we reduce the set cover problem to an instance of our problem that has one content. In the sequel, we only consider the $j$-th content, and hence, the content-index $j$ is omitted in the subscript of the notations throughout the rest of the paper. Moreover, in the online algorithm and in the proof of the competitive ratio, every term will be multiplied by the content size $s_j$, and hence, we set $s_j = 1$.

\subsection{The Online Algorithm}
The online algorithm for collaborative caching is presented in Algorithm \ref{alg:alg1}. We note that Algorithm \ref{alg:alg1} is for a specific content. For multiple contents, each content will have its own instance of the algorithm.

Throughout the algorithm, the following notations are used:
\begin{itemize}
\item $W$: the set of base stations already caching the content.
\item $V$: the set of requests processed so far by the algorithm.
\item $p(k)$: the potential function of the $k$-th base station.
\item For a cache $u$ and a request $v$ arriving at the $i$-th base station, $d(u,v) = T_{i}^{u}$
\item For a set of caches $\mathcal{X}$ and a request $v$ arriving at the $i$-th base station, $d(\mathcal{X},v) \equiv \min_{k \in \mathcal{X}}d(k,v)$.

\item $[x]^{+} \equiv \max\{x, 0\}$.
\end{itemize}

\begin{algorithm}
\caption{Online Collaborative Caching}
\label{alg:alg1}
\begin{algorithmic}[1]
\STATE{$W \leftarrow \{K+1\}, V \leftarrow \phi, Cost = 0$, initializePotentials()}
\FOR{each new request $v$ arriving at the $i$-th base station}
\IF{The $i$-th base station is already caching the content}
\STATE{Satisfy the request}
\ELSE
\STATE{$V \leftarrow V \cup \{v\}$}
\STATE{updatePotentials($W, v$)}
\STATE{$w \leftarrow \argmax_{k} (p(k) - f_{k})$}
\IF{$p(w) - f_{w} > 0$}
\STATE{$W \leftarrow W \cup \{w\}$}
\STATE{$Cost = Cost + f_w$}
\STATE{computeNewPotentials($W,V$)}
\ENDIF
\STATE{assign $v$ to $\alpha = \argmin_{k \in W}d(k,v)$}
\STATE{$Cost = Cost + T^{i}_{\alpha}$}
\ENDIF
\ENDFOR
\end{algorithmic}
\end{algorithm}

The functions initializePotentials(), updatePotentials(), and computeNewPotentials() used in Algorithm \ref{alg:alg1} are presented in Algorithm \ref{alg:alg2}.

\begin{algorithm}
\caption{Functions used in Algorithm \ref{alg:alg1}}
\label{alg:alg2}
\begin{algorithmic}
\STATE{initializePotentials()}
\FOR{all $k \in \mathcal{K} \cup \{K+1\}$}
\STATE{$p(k) = 0$}
\ENDFOR
\STATE{}
\STATE{updatePotentials($W, v$)}
\FOR{all $k \in \mathcal{K} \cup \{K+1\}$}
\STATE{$p(k) = p(k) + [d(W,v) - d(k,v)]^{+}$}
\ENDFOR
\STATE{}
\STATE{computeNewPotentials($W,V$)}
\FOR{all $k \in \mathcal{K} \cup \{K+1\}$}
\STATE{$p(k) = \sum_{v \in V}[d(W,v) - d(k,v)]^{+}$}
\ENDFOR
\end{algorithmic}
\end{algorithm}

The intuition behind Algorithm \ref{alg:alg1} is the following: we define a potential function for each base station. When a new request for a content arrives at a base station, the algorithm updates the potential function of each base station, which represents the total UA cost of the requests seen so far when that base station retrieves the content from the base station with the lowest UA cost (probably itself). The algorithm decides to cache the content at a base station when the potential of that base station exceeds its caching cost.

\subsection{Implementation and Complexity}
The execution of Algorithm \ref{alg:alg1} is done by the Mobility Management Entity (MME) of the cellular network, which acts as a centralized controller \cite{3gpp.36.300}. Different architectures for cellular networks like Software-Defined Cellular Networks \cite{li2012toward} take advantage of the centralized controller. The MME has access to the topology of the cellular network as well as the contents cached at each base station. The content providers pay the operators for running the caching algorithm at the MMEs and storing the contents at the base stations.

When a request for content arrives at a base station, the base station responds with the requested content if it already has a copy of the content in its cache. Otherwise, the base station sends a message indicating the requested content to the MME to execute the algorithm. Based on the available information to the MME (i.e. the topology of the cellular network as well as the contents cached at each base station), the MME runs the algorithm and decides whether the content has to be cached at a new base station $w$ or not. The MME then relays to the new base station $w$ (if any) the decision to cache the requested content and relays to the requesting base station the decision of which base station to retrieve the content from.

Next, we analyze the complexity of Algorithm \ref{alg:alg1}. Recall that $K$ denotes the number of base stations. The \emph{initializePotentials()} subroutine is executed once and has a complexity of $\mathcal{O}(K)$. For every new request, executing the \emph{updatePotentials()} subroutine, finding the base station with the maximum difference between the value of the potential function and the caching cost (line 5 in Algorithm \ref{alg:alg1}), and executing the \emph{computeNewPotentials()} subroutine each has a complexity of $\mathcal{O}(K)$. Note that the \emph{computeNewPotentials()} subroutine is executed only when the content is cached at a new cache, and hence is executed at most $K$ times. Therefore, for $n$ total number of requests, the overall complexity of implementing the algorithm is $\mathcal{O}(Kn + K^2)$.

In order to execute the algorithm, the MME needs to maintain two tables. The first table includes the value of the potential function of each base station. For the second table, it suffices to store the number of requests that arrived at each base station, since the value of $[d(W,v) - d(k,v)]^+$ in the \emph{computeNewPotentials()} subroutine is the same for requests arriving at the same base station. Therefore, the storage requirement needed at the MME is of order $\mathcal{O}(K)$. Note that the size of the tables is independent of the content's size. Therefore, as the size of the content becomes larger, which is the case for future content delivery traffic, the size of the tables will not grow.

\subsection{Preliminaries}
To compute the competitive ratio of the algorithm, we compare the algorithm's cost with the cost of the offline optimal solution. In the optimal offline solution, let $W^*$ denote the set of base stations caching the content. Then, the cost of the offline optimal solution is given by:
\begin{equation}
\mathcal{C}^{*} = \sum_{w \in W^*} f_{w} + \sum_{v \in V} d(W^*,v)
\end{equation}

Let the optimal solution $W^*$ consist of $l$ caches $c_1, c_2, \ldots, c_l$. In the optimal offline solution, each request is satisfied by retrieving the content from a cache. Hence, $W^*$ divides the requests into optimal clusters $C_1, C_2, \ldots, C_l$. For example, if the optimal solution decides to cache a content on three caches $c_1, c_4$, and $c_5$, then the first cluster $C_1$ consists of $c_1$ and all the base stations retrieving the content from $c_1$, the second cluster $C_2$ consists of $c_4$ and all the base stations retrieving the content from $c_4$, and so on.

\subsection{Proof Outline}
We start by proving that the algorithm maintains the invariant $p(k) \leq f_{k}$ for all $k \in \mathcal{K} \cup \{K+1\}$ (Lemma \ref{Lemma:lemma1}). Based on Lemma \ref{Lemma:lemma1} and the triangle inequality, we show that after $j$ requests from cluster $C_i$, there is a cache where the UA cost from the optimal cache $c_i$ is $\frac{1}{j}[f_{c_{i}} + 2\sum_{v \in C_i}d(W^*,v)]$ (Corollary 1). This is used to show that the UA cost of the algorithm is within a logarithmic factor of the total optimal cost $\mathcal{C}^{*}$ (Lemma \ref{Lemma:lemma2}).

For each new request $v$, we define a credit $\hat{c}(v)$. We show that $\hat{c}(v) = \min \{d(W,v), \min_{k} \{f_k - p(k) + d(k,v)\}\}$ (Lemma \ref{Lemma:lemma3}). We then show that the algorithm's total caching cost never exceeds the total credit of the requests in $V$ (Lemma \ref{Lemma:lemma4}). Using Corollary \ref{Corollary:corollary1}, we show that the total credit is within a logarithmic factor of the total optimal cost $\mathcal{C}^*$ (Lemma \ref{Lemma:lemma5}).

\subsection{The Proof}
\begin{lemma}\label{Lemma:lemma1}
$p(k) \leq f_k, \forall k \in \mathcal{K} \cup \{K+1\}$.
\end{lemma}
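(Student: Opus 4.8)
The plan is to prove the invariant by induction on the number of processed requests, exploiting the fact that the subroutines of Algorithm~\ref{alg:alg2} maintain, at all times, the closed form $p(k)=\sum_{v\in V}[d(W,v)-d(k,v)]^{+}$ for the \emph{current} sets $W$ and $V$. The base case is immediate: after \emph{initializePotentials()} every $p(k)=0\le f_k$. For the inductive step I would assume $p(k)\le f_k$ for all $k$ before a new request $v$ (arriving at base station $i$) is handled, and then track the two potential updates the algorithm performs.

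First I would record the effect of \emph{updatePotentials}$(W,v)$: it replaces each $p(k)$ by $A_k := p(k)^{\mathrm{old}} + [d(W,v)-d(k,v)]^{+}$, where $p(k)^{\mathrm{old}}\le f_k$ by the inductive hypothesis. If the condition on line~9 fails, then $w=\argmax_k (A_k-f_k)$ satisfies $A_w-f_w\le 0$, and since $A_k-f_k\le A_w-f_w$ for every $k$, all potentials remain $\le f_k$ and we are done. The substantive case is $A_w-f_w>0$: the algorithm opens $w$, sets $W':=W\cup\{w\}$, and \emph{computeNewPotentials} resets $p(k)^{\mathrm{new}}=\sum_{v'\in V'}[d(W',v')-d(k,v')]^{+}$ over the augmented $V'=V\cup\{v\}$. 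For $k=w$ this is trivially $0\le f_w$, because $w\in W'$ forces every summand to vanish.

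The crux is to show $p(k)^{\mathrm{new}}\le f_k$ for $k\ne w$. I would split off the contribution of the new request and use $W\subseteq W'$ (so $d(W',v')\le d(W,v')$) to bound the contribution of the old requests: $\sum_{v'\in V}[d(W',v')-d(k,v')]^{+}\le p(k)^{\mathrm{old}}$. It then remains to control the single new term $[d(W',v)-d(k,v)]^{+}$. If $d(w,v)\le d(k,v)$ (the opened cache is at least as close to $v$ as $k$ is), this term is $0$ and $p(k)^{\mathrm{new}}\le p(k)^{\mathrm{old}}\le f_k$ follows at once. The delicate situation, and the step I expect to be the main obstacle, is $d(k,v)<d(w,v)$: the triggering request is strictly closer to some other cache $k$ than to the cache $w$ that the algorithm actually opens, so opening $w$ need not directly shrink $p(k)$.

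To handle that case I would invoke the maximality of $w$. Writing $\Delta:=[d(W,v)-d(k,v)]^{+}-[d(W,v)-d(w,v)]^{+}\ge 0$, the inequality $A_w-f_w\ge A_k-f_k$ rearranges to $p(w)^{\mathrm{old}}-f_w\ge p(k)^{\mathrm{old}}-f_k+\Delta$, and combining it with $p(w)^{\mathrm{old}}\le f_w$ yields the slack bound $p(k)^{\mathrm{old}}+\Delta\le f_k$. A short case check on whether $w$ beats the incumbent set at $v$ (comparing $d(w,v)$ with $d(W,v)$) then shows that the new term is exactly $[d(W',v)-d(k,v)]^{+}=\Delta$, so $p(k)^{\mathrm{new}}\le p(k)^{\mathrm{old}}+\Delta\le f_k$, closing the induction. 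It is worth noting that this argument uses only the argmax selection rule together with the monotonicity $d(W',\cdot)\le d(W,\cdot)$; the triangle inequality on the $T_i^{k}$'s is not needed here and is instead reserved for the later cost-comparison results (\Cor{\ref{Corollary:corollary1}} and \Lemma{\ref{Lemma:lemma2}}).
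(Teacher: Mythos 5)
Your proof is correct and takes essentially the same route as the paper's: induction on the number of requests using the closed-form potential $p(k)=\sum_{v'\in V}[d(W,v')-d(k,v')]^{+}$, the same top-level split on whether a new cache $w$ is opened, the same split on the sign of $d(k,v)-d(w,v)$, and the same combination of the argmax optimality of $w$ with the inductive hypothesis $p(w)\leq f_w$ to absorb the new request's term into the slack $f_k-p(k)$. The only (harmless) cosmetic difference is that your $\Delta$-bookkeeping also covers the case $d(w,v)\geq d(W,v)$ directly, whereas the paper first rules it out by establishing $d(W,v)>d(w,v)$, i.e.\ its equation~\eqref{eq:proof1}, before treating the two subcases.
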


\begin{proof}
We prove this lemma by induction on the number of requests considered by the online algorithm. For the first request, the invariant holds since $p(k) = 0$ for all $k \in \mathcal{K} \cup \{K+1\}$. We inductively assume that the invariant holds just before a new request $v$ arrives and prove that the invariant holds after $v$ is assigned to retrieve the content from a cache.

Let $W$ be the set of caches that have cached the content, and let $V$ be the set of requests considered so far by the algorithm. Let $p(k) = \sum_{v' \in V}[d(W,v') - d(k,v')]^{+} \leq f_k$ be the potential of cache $k$ just before the new request $v$ arrives. Let $p'(k)$ be the potential after the subroutine \emph{updatePotentials()} in the algorithm is executed. Finally, let $p''(k)$ be the potential of cache $k$ after the request $v$ is assigned to a cache. We want to prove that $p''(k) \leq f_k$ for all $k$.

We have two cases:
\begin{itemize}
\item If the new request $v$ does not change $W$ (i.e. the new request did not cause the content to be cached at a new cache), then from the algorithm:
\begin{displaymath}
p'(k) - f_k \leq 0 \qquad \forall k
\end{displaymath}
and $p''(k) = p'(k)$. Therefore, $p''(k) \leq f_k$ for all $k$.
\item If the new request $v$ causes the content to be cached at a new cache $w$, then
\begin{align}
0 < p'(w) - f_w &= [d(W,v) - d(w,v)]^{+} + p(w) - f_w \nonumber\\
 & \leq [d(W,v) - d(w,v)]^{+} \nonumber
\end{align}
where the first inequality holds because the content is cached at $w$, the next equality holds from the definition of $p'(w)$, the last inequality holds from the induction hypothesis $p(w) \leq f_w$. Therefore, $[d(W,v) - d(w,v)]^{+} \geq p'(w) - f_w > 0$. This implies that $d(W,v) > d(w,v)$, which means that $v$ will be assigned to $w$ and
\begin{equation}
d(W \cup \{w\},v) = d(w,v) \label{eq:proof1}
\end{equation}

From here, we have two cases:
\begin{itemize}
\item For all caches $k$ where $d(k,v) < d(w,v)$, we have $[d(W\cup \{w\},v) - d(k,v)]^{+} = d(w,v) - d(k,v)$.

Using \eqref{eq:proof1}, we get that
\begin{align}
d(W,v) &- d(w,v) \nonumber\\
&\geq p'(w) - f_w \nonumber\\
 &\geq p'(k) - f_k \nonumber\\
 &= [d(W,v) - d(k,v)]^{+} + p(k) - f_k \nonumber\\
 &\geq d(W,v) - d(k,v) + p(k) - f_k \nonumber
\end{align}
where the second inequality follows from the fact that the content is cached at $w$, and the first equality follows from the definition of $p'(k)$. Therefore, $d(W,v) - d(w,v) \geq d(W,v) - d(k,v) + p(k) - f_k$. Rearranging the terms we get
\begin{align}
0 \geq & d(w,v) - d(k,v) + p(k) - f_k \nonumber\\
 \geq & d(w,v) - d(k,v)\nonumber\\
&+ \sum_{v' \in V}[d(W,v') - d(k,v')]^{+} - f_k \nonumber\\
 \geq & d(w,v) - d(k,v)\nonumber\\
&+ \sum_{v' \in V}[d(W\cup \{w\},v') - d(k,v')]^{+} - f_k \nonumber\\
  \geq & p''(k) - f_k \nonumber
\end{align}
where the last inequality follows from the definition of $p''(k)$.

\item For all caches $k$ where $d(k,v) \geq d(w,v)$, we have $[d(W\cup \{w\},v) - d(k,v)]^{+} = 0$. Therefore,
\begin{align}
p''(k) &= \sum_{v' \in V \cup \{v\}}[d(W\cup \{w\},v') - d(k,v')]^{+} \nonumber\\
 & = \sum_{v' \in V}[d(W\cup \{w\},v') - d(k,v')]^{+} \nonumber\\
 & \leq \sum_{v' \in V}[d(W,v') - d(k,v')]^{+} \nonumber\\
 & = p(k) \leq f_k \nonumber
\end{align}
where the first equality holds from the definition of $p''(k)$, the second equality follows since $[d(W\cup \{w\},v) - d(k,v)]^{+} = 0$, the third inequality follows since $d(W\cup \{w\},v') \leq d(W,v'), \forall v'$, and the last equality follows from the definition of $p(k)$ before the request $v$ appears.

\end{itemize}
\end{itemize}
\end{proof}

\begin{corollary}\label{Corollary:corollary1}
Let $V$ be the request set, and $W$ the set of caches caching the content after all requests in $V$ have been considered. Then for every optimal cluster $C_i$ with cache $c_i$,

\begin{equation}
|V \cap C_i|d(W,c_i) \leq f_{c_i} + 2\sum_{v \in C_i} d(W^*,v) \nonumber
\end{equation}
\end{corollary}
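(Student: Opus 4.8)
The plan is to combine Lemma~\ref{Lemma:lemma1} with the triangle inequality applied request-by-request. I would fix the optimal cluster $C_i$ with its cache $c_i$, and recall that by the definition of the optimal clustering every request $v \in C_i$ is served by $c_i$ in the offline solution, so that $d(W^*,v) = d(c_i,v)$. The target inequality will then follow by summing a pointwise bound over the requests in $V \cap C_i$ and controlling the resulting sum of online retrieval costs via the invariant of Lemma~\ref{Lemma:lemma1}.

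The first key step is a pointwise bound on $d(W,c_i)$. For each $v \in V \cap C_i$, let $\alpha(v) \in W$ denote the cache to which the online algorithm assigns $v$, so that $d(W,v) = d(\alpha(v),v)$. Viewing the location of $c_i$ as a request site and invoking the triangle inequality together with the symmetry of the cost (both inherited from the shortest-path structure of $T$), I would obtain
\[
d(W,c_i) \;\le\; d(\alpha(v),c_i) \;\le\; d(c_i,v) + d(\alpha(v),v) \;=\; d(W^*,v) + d(W,v),
\]
where the first inequality holds because $\alpha(v) \in W$ and the last equality uses $d(W^*,v)=d(c_i,v)$. Summing over all $v \in V \cap C_i$ gives
\[
|V \cap C_i|\, d(W,c_i) \;\le\; \sum_{v \in V \cap C_i} d(W^*,v) + \sum_{v \in V \cap C_i} d(W,v).
\]

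The second key step bounds the online cost $\sum_{v \in V \cap C_i} d(W,v)$ using Lemma~\ref{Lemma:lemma1}. After all requests have been processed, the potential of $c_i$ equals $p(c_i) = \sum_{v' \in V}[d(W,v') - d(c_i,v')]^{+} \le f_{c_i}$. Restricting the sum to $v \in V \cap C_i$ and dropping the $[\cdot]^{+}$ (which only decreases each term) yields $\sum_{v \in V \cap C_i}\bigl(d(W,v) - d(c_i,v)\bigr) \le f_{c_i}$, hence $\sum_{v \in V \cap C_i} d(W,v) \le f_{c_i} + \sum_{v \in V \cap C_i} d(W^*,v)$. Substituting into the bound from the first step gives $|V \cap C_i|\, d(W,c_i) \le f_{c_i} + 2\sum_{v \in V \cap C_i} d(W^*,v)$, and since $V \cap C_i \subseteq C_i$ with all distances nonnegative, the sum can be enlarged to run over all of $C_i$, which is exactly the claim. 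The main obstacle I anticipate is the first step: orienting the triangle inequality correctly and justifying that the online cost $d(W,c_i)$ of reaching $c_i$'s location can be split into the optimal per-request cost $d(W^*,v)$ plus the online per-request cost $d(W,v)$. This hinges on the symmetry of the cost function and on correctly identifying the online assignment $\alpha(v)$; the remaining manipulations of the sums and the $[\cdot]^{+}$ operator are routine.
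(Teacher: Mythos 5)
Your proof is correct and follows essentially the same route as the paper's: both arguments combine the invariant $p(c_i) \leq f_{c_i}$ from Lemma~\ref{Lemma:lemma1} with the triangle inequality $d(W,c_i) \leq d(W,v) + d(c_i,v)$ over $v \in V \cap C_i$, drop the $[\cdot]^{+}$, and enlarge the sum to all of $C_i$. The only difference is presentational — you separate the pointwise triangle-inequality bound from the potential bound, while the paper runs both in a single chain starting from $p(c_i)$ — and your worry about symmetry is harmless, since the shortest-path construction of $T_{ij}^k$ in the paper makes the costs symmetric.
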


\begin{proof}
for cache $c_i$ we have:
\begin{align}
p(c_i) &= \sum_{v \in V}[d(W,v) - d(c_i,v)]^{+} \nonumber \\
&\geq \sum_{v \in V \cap C_i}[d(W,v) - d(c_i,v)]\nonumber \\
&\geq \sum_{v \in V \cap C_i}[d(W,c_i) - d(c_i,v) - d(c_i,v)]\nonumber \\
&\geq \sum_{v \in V \cap C_i}d(W,c_i) - 2\sum_{v \in V \cap C_i}d(c_i,v)\nonumber \\
&\geq \sum_{v \in V \cap C_i}d(W,c_i) - 2\sum_{v \in C_i}d(c_i,v)\nonumber
\end{align}
where the second inequality is obtained by using the triangle inequality (i.e. $d(W,v) \geq d(W,c_i) - d(c_i,v)$).

Using the invariant $f_{c_i} \geq p(c_i)$ from Lemma 1 and rearranging the terms, we get:
\begin{equation}
|V \cap C_i|d(W,c_i) \leq f_{c_i} + 2\sum_{v \in C_i} d(W^*,v) \nonumber
\end{equation}
where $d(W^*,v) = d(c_i,v)$ for all $v \in C_i$ follows from the definition of cluster $C_i$.
\end{proof}

In the next lemma, we use Corollary \ref{Corollary:corollary1} to bound the UA cost incurred by the online algorithm.
\begin{lemma}\label{Lemma:lemma2}
Let $\sum_{v \in V}d(W,v)$ denote the total UA cost incurred by the online algorithm. Then
\begin{align}
\sum_{v \in V}d(W,v) &\leq \log(n+1)\sum_{w \in W^*}f_w \nonumber \\
&+ (2\log(n+1) + 1)\sum_{v \in V}d(W^*,v) \nonumber
\end{align}
\end{lemma}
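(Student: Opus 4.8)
The plan is to charge the online user-attrition (UA) cost cluster by cluster against the optimal solution, exploiting the fact that Corollary~\ref{Corollary:corollary1} holds not only for the final configuration but at every intermediate step of the algorithm. Since the optimal clusters $C_1,\dots,C_l$ partition the requests, I would first write
\begin{displaymath}
\sum_{v \in V} d(W,v) = \sum_{i=1}^{l}\sum_{v \in V \cap C_i} d(W,v),
\end{displaymath}
and then bound each inner sum separately. The quantity that will control cluster $C_i$ is $B_i := f_{c_i} + 2\sum_{v \in C_i} d(W^*,v)$, which is exactly the right-hand side appearing in Corollary~\ref{Corollary:corollary1}.

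The core of the argument is a per-cluster, time-indexed estimate. I would order the requests of $V \cap C_i$ by the time the algorithm processes them as $v_1,\dots,v_{m_i}$, with $m_i = |V \cap C_i|$, and let $W^{(j)}$ denote the online cache set immediately after $v_j$ has been assigned. At that moment the processed part of $C_i$ has size exactly $j$, so applying Corollary~\ref{Corollary:corollary1} to this intermediate configuration (its proof only uses Lemma~\ref{Lemma:lemma1} and the triangle inequality, both available at every step) gives $j\, d(W^{(j)},c_i) \le B_i$, i.e.\ $d(W^{(j)},c_i) \le B_i/j$. Now bound the cost of the $j$-th request: picking the cache in $W^{(j)}$ nearest to $c_i$ and using the triangle inequality together with $d(W^*,v_j)=d(c_i,v_j)$ yields
\begin{displaymath}
d(W^{(j)},v_j) \le d(W^{(j)},c_i) + d(c_i,v_j) \le \frac{B_i}{j} + d(W^*,v_j).
\end{displaymath}
Because the cache set only grows, $W \supseteq W^{(j)}$ and hence $d(W,v_j) \le d(W^{(j)},v_j)$, so the same upper bound controls the final-configuration cost $d(W,v_j)$.

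Summing over $j = 1,\dots,m_i$ turns the $B_i/j$ terms into a harmonic sum, giving
\begin{displaymath}
\sum_{v \in V \cap C_i} d(W,v) \le B_i \sum_{j=1}^{m_i}\frac{1}{j} + \sum_{v \in V \cap C_i} d(W^*,v).
\end{displaymath}
Since $m_i \le n$, I would bound the harmonic number by $\log(n+1)$ and then sum over all clusters. Using $\sum_i f_{c_i} = \sum_{w \in W^*} f_w$ and the partition identity $\sum_i \sum_{v \in C_i} d(W^*,v) = \sum_{v \in V} d(W^*,v)$, the $f_{c_i}$ contributions collect into $\log(n+1)\sum_{w\in W^*} f_w$, the $2\sum d(W^*,v)$ contributions into $2\log(n+1)\sum_{v\in V} d(W^*,v)$, and the residual $\sum_{v \in V \cap C_i} d(W^*,v)$ terms into a further $\sum_{v\in V} d(W^*,v)$, which together give precisely the claimed inequality.

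The main obstacle, and the step I would be most careful about, is the passage from the single end-of-run statement of Corollary~\ref{Corollary:corollary1} to the family of intermediate bounds $d(W^{(j)},c_i)\le B_i/j$: one must verify that the corollary's hypotheses (the invariant of Lemma~\ref{Lemma:lemma1} and the triangle inequality) are genuinely in force after each request, and that the monotonicity $W \supseteq W^{(j)}$ legitimately lets the final-configuration cost $d(W,v_j)$ inherit the intermediate bound. This is exactly what produces the $1/j$ decay, and hence the logarithmic factor; once it is in place, the remaining manipulations—the triangle inequality for each request, the harmonic-to-$\log(n+1)$ estimate (valid since $m_i\le n$), and the reassembly of the cluster sums—are routine bookkeeping.
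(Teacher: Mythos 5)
Your proposal is correct and takes essentially the same route as the paper's proof: the same decomposition into optimal clusters, the same application of Corollary~\ref{Corollary:corollary1} at each intermediate configuration to obtain the bound $d(W^{(j)},c_i)\le \frac{1}{j}\bigl[f_{c_i}+2\sum_{v\in C_i}d(W^*,v)\bigr]$, the per-request triangle inequality, and the harmonic-sum-to-logarithm estimate summed over clusters. Your explicit verifications---that the corollary remains valid at every step because Lemma~\ref{Lemma:lemma1}'s invariant holds throughout, and that monotonicity of the cache set gives $d(W,v_j)\le d(W^{(j)},v_j)$---merely make precise two steps the paper uses implicitly.
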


\begin{proof}
Let $C_i$ be an optimal cluster with cache $c_i$. Let $n_i \equiv |C_i|$ be the number of requests in $C_i$, and let $v_1,v_2,\ldots,v_{n_i}$ be the requests in $C_i$ in the order considered by the algorithm.

For each request $v_j$, let $W_{v_j}$ be the set of caches caching the content at $v_j$'s assignment time. Then, using triangle inequality we have
\begin{displaymath}
d(W_{v_j},v_j) \leq d(W_{v_j},c_i) + d(c_i,v_j)
\end{displaymath}

From Corollary \ref{Corollary:corollary1}, we have
\begin{displaymath}
d(W_{v_j},c_i) \leq \frac{1}{j}[f_{c_i} + 2\sum_{v \in C_i}d(W^*,v)]
\end{displaymath}

Therefore
\begin{displaymath}
d(W_{v_j},v_j) \leq \frac{1}{j}[f_{c_i} + 2\sum_{v \in C_i}d(W^*,v)] + d(c_i,v_j)
\end{displaymath}

Summing over all $v_j \in C_i$ we get that $\sum_{j = 1}^{n_i}d(W_{v_j},v_j)$
\begin{align}
\leq &[f_{c_i} + 2\sum_{v \in C_i}d(W^*,v)]\sum_{j = 1}^{n_i}\frac{1}{j} + \sum_{j = 1}^{n_i}d(c_i,v_j) \nonumber \\
\leq &\log(n_i + 1)f_{c_i} \nonumber\\
&+ (2\log(n_i + 1) + 1)\sum_{j = 1}^{n_i}d(c_i,v_j)\nonumber
\end{align}
The lemma follows by summing over all clusters.
\end{proof}

The next 3 lemmas are used to bound the caching cost incurred by the online algorithm. To do this, we define a credit $\hat{c}(v)$ to each new request $v$. We show that $\hat{c}(v) = \min\{d(W,v), \min_{k}\{f_k - p(k) +d(k,v)\}\}$. Then we show that the total caching cost is upper bounded by the total credits of all requests, which in turn is within a logarithmic factor of the optimal offline cost $\mathcal{C}^*$.

\begin{lemma}\label{Lemma:lemma3}
For each new request $v$, $\hat{c}(v) = f_w - p(w) +d(w,v)$ if $v$ causes the content to be cached at $w$, and $\hat{c}(v) = d(W,v)$ otherwise.
\end{lemma}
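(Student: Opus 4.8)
The plan is to start from the closed form $\hat{c}(v) = \min\{d(W,v),\min_{k}\{f_k - p(k) + d(k,v)\}\}$ stated just before the lemma and show that it collapses to the claimed two-branch expression according to whether Algorithm~\ref{alg:alg1} opens a new cache at $v$. Throughout, let $p(k)$ be the potential held just before $v$ is processed, and write $p'(k) = p(k) + [d(W,v) - d(k,v)]^{+}$ for the value produced by updatePotentials($W,v$). The algorithm sets $w = \argmax_{k}(p'(k) - f_k)$ and opens $w$ exactly when $p'(w) - f_w > 0$, so the entire argument is a case analysis on the sign of $p'(w) - f_w$.

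The key step is a single algebraic identity. Substituting $p(k) = p'(k) - [d(W,v)-d(k,v)]^{+}$ and using $[d(W,v)-d(k,v)]^{+} + d(k,v) = \max\{d(W,v),d(k,v)\}$, I obtain for every $k$
\begin{equation}
f_k - p(k) + d(k,v) = \max\{d(W,v),d(k,v)\} - (p'(k) - f_k). \nonumber
\end{equation}
In particular $f_k - p(k) + d(k,v) \geq d(W,v) - (p'(k)-f_k)$, with equality precisely when $d(k,v) \leq d(W,v)$. Both branches of the lemma then just read this identity off in the relevant case.

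If no cache is opened, then $p'(k) - f_k \leq 0$ for all $k$, so the identity gives $f_k - p(k) + d(k,v) \geq \max\{d(W,v),d(k,v)\} \geq d(W,v)$ for every $k$; hence the inner minimum is at least $d(W,v)$ and $\hat{c}(v) = d(W,v)$. If a cache $w$ is opened, Lemma~\ref{Lemma:lemma1} gives $p(w) \leq f_w$, and combined with $p'(w) > f_w$ this forces $[d(W,v)-d(w,v)]^{+} > 0$, i.e.\ $d(w,v) < d(W,v)$; thus $\max\{d(W,v),d(w,v)\} = d(W,v)$ and $f_w - p(w)+d(w,v) = d(W,v) - (p'(w)-f_w) < d(W,v)$. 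For any other $k$ the identity together with the $\argmax$ property $p'(w)-f_w \geq p'(k)-f_k$ yields $f_k - p(k)+d(k,v) \geq d(W,v) - (p'(k)-f_k) \geq d(W,v)-(p'(w)-f_w) = f_w-p(w)+d(w,v)$, so $w$ also attains $\min_{k}\{f_k - p(k)+d(k,v)\}$ and this minimum beats $d(W,v)$. Therefore $\hat{c}(v) = f_w - p(w)+d(w,v)$, as claimed.

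The only real obstacle is getting the identity right, which means being careful about which snapshot of the potential each symbol refers to (the pre-update $p$ versus the post-update $p'$) and correctly folding the $[\cdot]^{+}$ into the $\max$. Once that identity is in hand, the decisive conceptual point --- that the cache the algorithm selects by its $\argmax$ rule is exactly the one attaining $\min_{k}\{f_k - p(k)+d(k,v)\}$ --- is immediate, and the appeal to Lemma~\ref{Lemma:lemma1} to guarantee $d(w,v) < d(W,v)$ is the single place where the invariant is actually needed.
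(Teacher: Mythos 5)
Your proof is correct and takes essentially the same route as the paper's: the same case split on the sign of $p'(w)-f_w$, the same use of the $\argmax$ rule to show that the opened cache $w$ attains $\min_{k}\{f_k - p(k) + d(k,v)\}$ and beats $d(W,v)$, and the same appeal to the Lemma~\ref{Lemma:lemma1} invariant to force $d(w,v) < d(W,v)$. Your identity $f_k - p(k) + d(k,v) = \max\{d(W,v),d(k,v)\} - (p'(k)-f_k)$ is simply a clean repackaging of the paper's inline inequalities --- indeed it tracks the $[\cdot]^{+}$ more carefully than the paper, which drops it without comment in its definition of $p'(k)$.
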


\begin{proof}
Let $p(k)$ denote the potential function of the cache $k$ just before a new request $v$ arrives, and let $p'(k) = p(k) + d(W,v) - d(k,v)$ be the potential function after the subroutine \emph{updatePotential()} is executed.

Now, if the new request $v$ causes the content to be cached at $w$, then from the algorithm,
\begin{align}
0 < p'(w) - f_w &= p(w) + [d(W,v) - d(w,v)]^+ - f_w \nonumber\\
&= p(w) + d(W,v) - d(w,v) - f_w \nonumber\\
&\geq p(k) + [d(W,v) - d(k,v)]^+ - f_k \nonumber\\
&\geq p(k) + d(W,v) - d(k,v) - f_k \nonumber
\end{align}
The first inequality holds because the content is cached at $w$. The first equality follows from the definition of $p'(w)$. The second equality follows by using (4)(see the proof of Lemma 1, page 6). The second inequality holds because $p'(w) - f_w \geq p'(k) - f_k, \forall k$ and from the definition of $p'(k)$.
Therefore, $p(w) + d(W,v) - d(w,v) - f_w \geq p(k) + d(W,v) - d(k,v) - f_k$. Rearranging the terms we get that $f_w - p(w) + d(w,v) \leq f_k - p(k) + d(k,v)$. We also have $p(w) + d(W,v) - d(w,v) - f_w > 0$. Therefore, $\hat{c}(v) = f_w - p(w) +d(w,v)$ if $v$ causes the content to be cached at $w$.

On the other hand, if the new request $v$ does not cause an additional copy of the content to be cached, then from the algorithm,
\begin{align}
&p'(k) - f_k \leq 0, \forall k \nonumber \\
&p(k) + d(W,v) - d(k,v) - f_k \leq 0, \forall k \nonumber \\
&d(W,v) \leq f_k - p(k) + d(k,v), \forall k \nonumber
\end{align}
Therefore, $\hat{c}(v) = d(W,v)$.
\end{proof}

In the next lemma, we show that the total caching cost incurred by the online algorithm is upper bounded by the total credit of the requests in $V$.
\begin{lemma}\label{Lemma:lemma4}
Let $V$ be the set of requests, and let $W$ be the set of caches caching the content after all requests in $V$ have been considered. Then
\begin{displaymath}
\sum_{w \in W}f_w \leq \sum_{v \in V}\hat{c}(v)
\end{displaymath}
\end{lemma}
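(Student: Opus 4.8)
The plan is to combine the exact characterization of the credit in Lemma~\ref{Lemma:lemma3} with an amortized charging argument driven by a telescoping identity. Since the algorithm adds a cache to $W$ only when the caching test succeeds, every cache $w \in W$ (other than the initially present $K+1$, which carries no caching cost) is opened by a unique request, which I would denote $v_w$. By Lemma~\ref{Lemma:lemma3}, $\hat c(v_w) = f_w - p(w) + d(w,v_w)$, where $p(w)$ is the potential of $w$ recorded just before \emph{updatePotentials()} is run on $v_w$. Rearranging gives $f_w = \hat c(v_w) + p(w) - d(w,v_w)$, so summing over all opened caches,
\begin{displaymath}
\sum_{w \in W} f_w = \sum_{w \in W}\hat c(v_w) + \sum_{w \in W}\bigl(p(w) - d(w,v_w)\bigr).
\end{displaymath}
Because every non-opening request $v$ satisfies $\hat c(v) = d(W_v,v)$ (again Lemma~\ref{Lemma:lemma3}), it suffices to prove the single inequality
\begin{displaymath}
\sum_{w \in W}\bigl(p(w) - d(w,v_w)\bigr) \le \sum_{v \text{ non-opening}} d(W_v,v),
\end{displaymath}
after which the lemma follows by adding $\sum_{w}\hat c(v_w)$ to both sides.

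The central tool is a telescoping identity for how a single request feeds the potentials. First I would expand $p(w) = \sum_{v' \text{ before } v_w}[d(W',v') - d(w,v')]^+$, where $W'$ is the cache set just before $v_w$; this is exactly the value maintained by \emph{computeNewPotentials()} and \emph{updatePotentials()}, since $W$ is unchanged between successive openings. Interchanging the order of summation rewrites $\sum_w p(w)$ as $\sum_{v'}(\text{total charge on } v')$, where the charge on $v'$ from opening $w$ is $[d(W_w,v') - d(w,v')]^+$ and $W_w$ is the cache set immediately before $w$ is opened. Fixing a request $v'$ and listing the caches $w_1,w_2,\ldots$ opened after $v'$ in order, the charge from $w_t$ equals $D_{t-1}-D_t$, where $D_t$ is the distance from $v'$ to the cache set after the $t$-th of these openings and $D_0$ is the distance right after $v'$ itself has been served; this is because adding $w_t$ replaces $D_{t-1}$ by $\min(D_{t-1},d(w_t,v'))$. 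Hence the charges telescope and sum to at most $D_0$.

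It remains to identify $D_0$ for the two types of requests. For a non-opening $v'$ the serving set is unchanged, so $D_0 = d(W_{v'},v') = \hat c(v')$; for an opening request $v'$ (opening $w'$) the request is served by $w'$ since $d(W,v') > d(w',v')$, so $D_0 = d(w',v')$. Summing the telescoped bounds yields
\begin{displaymath}
\sum_{w \in W} p(w) \le \sum_{v \text{ non-opening}} d(W_v,v) + \sum_{w \in W} d(w,v_w),
\end{displaymath}
where the last term collects the $D_0$-bounds of the opening requests under the bijection $w \leftrightarrow v_w$. Subtracting $\sum_{w} d(w,v_w)$ from both sides gives precisely the required inequality, completing the argument.

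I expect the main obstacle to be the accounting for opening requests: a single request contributes to the potentials of many caches at once, and the request that opens $w$ also feeds the potential of later caches, so a naive ``charge $f_w$ to the requests that built $p(w)$'' would double-count and overflow the credit budget. The resolution is exactly the telescoping above, together with the observation that after $v'$ is served its residual distance, and hence all of its future charges, is governed by $d(w',v')$ rather than by the larger pre-opening distance $d(W,v')$; this is what makes the opening-request contributions cancel against $\sum_w d(w,v_w)$ instead of exceeding the available credit.
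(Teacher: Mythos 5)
Your argument is correct, and the delicate points all check out: each cache is opened by a unique request; the $p(w)$ appearing in Lemma~\ref{Lemma:lemma3} sums only over requests strictly earlier than $v_w$, so your interchange of summation is well defined; the per-opening charge indeed satisfies $[d(W_{w_t},v') - d(w_t,v')]^{+} = D_{t-1} - \min\{D_{t-1}, d(w_t,v')\} = D_{t-1} - D_t$; and your identification $D_0 = d(w',v')$ for an opening request is exactly \eqref{eq:proof1}. That said, this is the paper's amortized argument read in the dual order rather than a new one. The paper defines the single global potential $\Phi = \sum_{v \in V} d(W,v)$ and tracks its change per request: $\Delta \Phi = d(W,v) = \hat{c}(v)$ for a non-opening request, and $\Delta \Phi = d(w,v) - p(w)$ for an opening one (this last step is precisely your observation that $p(w)$ equals the total drop in request distances caused by adding $w$), which by Lemma~\ref{Lemma:lemma3} gives the exact identity $\sum_{v \in V}\hat{c}(v) = \Phi + \sum_{w \in W} f_w$, and the lemma follows from $\Phi \geq 0$. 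Your version expands each $p(w)$, swaps the two summations, and telescopes each request's distance sequence over the later openings; the final residual distances you drop (one per request) sum to exactly the paper's terminal $\Phi$, so the two proofs evaluate the same double sum, yours row by row (per request) and the paper's column by column (per time step). What the paper's packaging buys is brevity and an identity rather than an inequality; what yours buys is an explicit charging scheme that names which requests pay for which opening and shows why an opening request's future charges are governed by $d(w',v')$ rather than the larger $d(W,v')$ --- which is precisely what dissolves the double-counting worry you flag at the end.
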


\begin{proof}
We prove this lemma by a potential function argument. We define the potential function $\Phi = \sum_{v \in V}d(W,v)$ and calculate the change $\Delta \Phi$ in the value of the potential function when a new request $v$ is considered. Let $p(k)$ be the potential of each cache $k$ just before the new request $v$ arrives.

If the new request $v$ does not cause the content to be cached at a new cache (i.e $W$ is not changed), then $\Delta \Phi = d(W,v) = \hat{c}(v)$ by Lemma \ref{Lemma:lemma3}. Otherwise, if $v$ causes the content to be cached at $w$, then $d(W\cup \{w\},v) = d(w,v)$ (recall \eqref{eq:proof1} in Lemma \ref{Lemma:lemma1}) for all $v \in V$, and $d(W,v) - d(W\cup \{w\},v) = [d(W,v) - d(w,v)]^{+}$. Therefore,
\begin{align}
\Delta \Phi &= d(w,v) - \sum_{v' \in V}[d(W,v') - d(W\cup \{w\},v')] \nonumber\\
 & = d(w,v) - \sum_{v' \in V}[d(W,v') - d(w,v')]^{+} \nonumber\\
 & = d(w,v) - p(w) \nonumber
\end{align}
From Lemma \ref{Lemma:lemma3}, we have $\hat{c}(v) = f_w - p(w) + d(w,v) = f_w + \Delta \Phi$.
Therefore, $\sum_{v \in V}\hat{c}(v) = \Phi + \sum_{w \in W}f_w$. The lemma follows since $\Phi \geq 0$.
\end{proof}

In the next lemma, we use Corollary \ref{Corollary:corollary1} to upper bound the total credit of the requests in $V$.
\begin{lemma}\label{Lemma:lemma5}
Let $V$ be the set of requests. Then
\begin{displaymath}
\sum_{v \in V}\hat{c}(v) \leq (\log(n) + 1)\sum_{w \in W^*}f_w + (2\log(n) + 1)\sum_{v \in V}d(W^*,v)
\end{displaymath}
\end{lemma}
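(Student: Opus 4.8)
The plan is to reduce the bound on the total credit to the same harmonic-series argument already used for the UA cost in Lemma~\ref{Lemma:lemma2}. The first step is to observe that every request's credit is dominated by the UA cost it would pay at its own arrival time. Writing $W_v$ for the set of caches at the instant $v$ is processed, Lemma~\ref{Lemma:lemma3} gives $\hat{c}(v) = d(W_v,v)$ when $v$ triggers no new caching, and $\hat{c}(v) = f_w - p(w) + d(w,v)$ when $v$ caches at $w$; in the latter case the caching condition $p'(w) - f_w > 0$, combined with the definition of $p'(w)$ and the fact that $d(W_v,v) > d(w,v)$, rearranges to $f_w - p(w) + d(w,v) < d(W_v,v)$. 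Hence in both cases $\hat{c}(v) \le d(W_v,v)$, so it suffices to bound $\sum_{v \in V} d(W_v,v)$.

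Next I would fix an optimal cluster $C_i$ with cache $c_i$ and order its requests $v_1,\dots,v_{n_i}$ by processing time, where $n_i = |C_i|$. The crucial point is that Corollary~\ref{Corollary:corollary1} is not tied to the final state: its derivation uses only the invariant $p(c_i) \le f_{c_i}$ of Lemma~\ref{Lemma:lemma1}, which holds at every step, together with $p(c_i) = \sum_{v}[d(W,v)-d(c_i,v)]^{+}$ taken over the requests seen so far. Applying it at the moment $v_j$ is processed, when exactly $j$ requests of $C_i$ have been seen, gives $j\, d(W_{v_j},c_i) \le f_{c_i} + 2\sum_{v \in C_i} d(W^*,v)$, that is $d(W_{v_j},c_i) \le \tfrac{1}{j}\big[f_{c_i} + 2\sum_{v \in C_i} d(W^*,v)\big]$. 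Combining this with the triangle inequality $d(W_{v_j},v_j) \le d(W_{v_j},c_i) + d(c_i,v_j)$ yields a per-request bound identical in shape to the one obtained inside the proof of Lemma~\ref{Lemma:lemma2}.

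Finally I would sum the per-request bound over $j = 1,\dots,n_i$, pulling the factor $f_{c_i} + 2\sum_{v\in C_i}d(W^*,v)$ outside the sum, bounding the harmonic partial sum $\sum_{j=1}^{n_i}\tfrac{1}{j}$ by $\log(n)+1$ (using $n_i \le n$), and using $d(W^*,v) = d(c_i,v)$ for $v \in C_i$ to fold the residual $\sum_j d(c_i,v_j)$ term back into the cluster's optimal UA cost. Summing the resulting per-cluster inequalities over all clusters and using that the clusters partition $V$, so that $\sum_i f_{c_i} = \sum_{w \in W^*} f_w$ and $\sum_i \sum_{v \in C_i} d(c_i,v) = \sum_{v \in V} d(W^*,v)$, produces the stated bound with the coefficients $\log(n)+1$ and $2\log(n)+1$.

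The main obstacle is the first paragraph's reduction together with the legitimacy of invoking Corollary~\ref{Corollary:corollary1} at an intermediate time. One must be careful that the $W$ appearing in the credit is the \emph{running} cache set $W_v$ rather than the final $W$, and that the number of already-processed $C_i$-requests at $v_j$ is exactly $j$, since this is precisely what supplies the $\tfrac{1}{j}$ factor that makes the harmonic sum, and hence the logarithmic competitive factor, appear. Once these two points are secured, the rest is the same bookkeeping with harmonic numbers carried out in Lemma~\ref{Lemma:lemma2}.
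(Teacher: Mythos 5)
Your first-paragraph reduction --- ``$\hat{c}(v) \le d(W_v,v)$ in both cases, so it suffices to bound $\sum_{v \in V} d(W_v,v)$'' --- is where the argument genuinely breaks. The inequality itself is correct (with $W_v$ the cache set \emph{before} $v$ is processed), but the surrogate quantity $\sum_{v \in V} d(W_v,v)$ does not obey the stated bound. Consider the first request $v_1$ of a cluster $C_i$ with optimal cache $c_i$: at that instant $W_{v_1}$ may consist only of $\{K+1\}$, so $d(W_{v_1},v_1)$ can be arbitrarily large compared with $f_{c_i} + d(c_i,v_1)$, i.e., compared with everything on the right-hand side of the lemma. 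The credit of such a request is small only because of the \emph{other} branch of the min in Lemma \ref{Lemma:lemma3}: $\hat{c}(v_1) \le f_{c_i} - p(c_i) + d(c_i,v_1) \le f_{c_i} + d(W^*,v_1)$. The paper's proof uses precisely this bound for the first request of each cluster; by discarding that branch you lose the per-cluster $f_{c_i}$ term (summing to $\sum_{w \in W^*} f_w$) that absorbs the unbounded pre-arrival distance, and no harmonic-sum bookkeeping can recover it.

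The second flaw is the index in your application of Corollary \ref{Corollary:corollary1}. The corollary requires the invariant $p(c_i) \le f_{c_i}$ to hold for the \emph{same} pair $(W,V)$ appearing in it. When $v_j$ arrives, the invariant-consistent state pairs the pre-arrival set $W_{v_j}$ with only the $j-1$ earlier cluster requests, yielding $d(W_{v_j},c_i) \le \frac{1}{j-1}\bigl[f_{c_i} + 2\sum_{v \in C_i} d(W^*,v)\bigr]$ for $j \ge 2$ and nothing at all for $j=1$. Your count-$j$ version holds only for the \emph{post}-caching set $W_{v_j}\cup\{w\}$ (after \emph{computeNewPotentials()} restores the invariant), whereas the credit bound needs the pre-caching set --- indeed, when $v_j$ triggers caching, $\hat{c}(v_j) = f_w - p(w) + d(w,v_j) \ge d(W_{v_j}\cup\{w\},v_j)$, so the post-set inequality runs the wrong way. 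This off-by-one is harmless for $j \ge 2$ (the paper sums $\sum_{j \ge 2}\frac{1}{j-1}$ and gets the same logarithm), but together with the first gap it means your plan only closes after reinstating the paper's two-case treatment: $\hat{c}(v_1) \le f_{c_i} + d(W^*,v_1)$ for the first cluster request and the factor $\frac{1}{j-1}$ for the remainder --- at which point the argument coincides with the paper's proof.
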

\begin{proof}
Let $C_i$ be an optimal cluster with cache $c_i$ and $n_i \equiv |C_i|$ be the number of requests in $C_i$. For each request $v_j \in C_i$, let $W_{v_j}$ be the set of caches caching the content just before $v_j$ arrives. Note that $d(W^*,v_j) = d(c_i,v_j), \forall v_j \in C_i$.

The credit of each request $v_j$ is $\hat{c}(v_j) \leq \min\{d(W_{v_j},v_j), f_{c_i} + d(W^*,v_j)\}$ (using Lemma \ref{Lemma:lemma3}). For the first request, $\hat{c}(v_j) \leq f_{c_i} + d(W^*,v_j)$. For the remaining requests $v_j, j \geq 2$, we use Corollary \ref{Corollary:corollary1} to get
\begin{align}
\hat{c}(v_j) &\leq d(W_{v_j},v_j) \nonumber \\
&\leq d(W_{v_j},c_i) + d(c_i,v_j) \nonumber \\
&\leq \frac{1}{j-1}[f_{c_i} + 2\sum_{v_j \in C_i}d(W^*,v_j)] + d(c_i,v_j) \nonumber
\end{align}

Summing over all $v_j$, we get
\begin{align}
\sum_{j = 1}^{n_i}\hat{c}(v_j) &\leq f_{c_i} + [f_{c_i} + 2\sum_{v_j \in C_i}d(W^*,v_j)]\sum_{j = 2}^{n_i}\frac{1}{j - 1} \nonumber \\
&+ \sum_{j = 1}^{n_i}d(W^*,v_j) \nonumber \\
&\leq (\log(n_i) + 1)f_{c_i} + (2\log(n_i) + 1)\sum_{j = 1}^{n_i}d(W^*,v_j) \nonumber
\end{align}

The lemma follows by summing over all clusters.
\end{proof}

Now we are ready to prove the algorithm's competitive ratio
\begin{theorem}\label{theorem:theorem2}
The competitive ratio of the online algorithm is no more than $4\log(n+1) + 2$.
\end{theorem}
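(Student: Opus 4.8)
The plan is to assemble the competitive ratio directly from the bounds already proved, since the algorithm's total cost $\mathcal{P}_{on}$ decomposes cleanly into its accumulated UA cost and its total caching cost. I would bound each of these two pieces against the optimal offline cost $\mathcal{C}^{*} = \sum_{w \in W^*} f_w + \sum_{v \in V} d(W^*,v)$ and then combine them.

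First I would handle the UA part with Lemma \ref{Lemma:lemma2}, which gives $\sum_{v \in V} d(W,v) \leq \log(n+1)\sum_{w \in W^*} f_w + (2\log(n+1)+1)\sum_{v \in V} d(W^*,v)$. Next, for the caching part, I would chain Lemma \ref{Lemma:lemma4} (the caching cost is at most the total credit of the requests) with Lemma \ref{Lemma:lemma5} (the total credit is within a logarithmic factor of $\mathcal{C}^{*}$), yielding $\sum_{w \in W} f_w \leq (\log(n)+1)\sum_{w \in W^*} f_w + (2\log(n)+1)\sum_{v \in V} d(W^*,v)$.

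Then I would simply add the two inequalities to bound $\mathcal{P}_{on}$. Using the monotonicity $\log(n) \leq \log(n+1)$ to unify the coefficients, the multiplier of $\sum_{w \in W^*} f_w$ collapses to at most $2\log(n+1)+1$, while the multiplier of $\sum_{v \in V} d(W^*,v)$ collapses to at most $4\log(n+1)+2$. Since the latter dominates the former, I would factor $4\log(n+1)+2$ out of both terms to conclude $\mathcal{P}_{on} \leq (4\log(n+1)+2)\,\mathcal{C}^{*}$, which is precisely the claimed ratio.

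This final step is essentially bookkeeping, and I do not expect a genuine obstacle once Lemmas \ref{Lemma:lemma2}, \ref{Lemma:lemma4}, and \ref{Lemma:lemma5} are in hand. The only points needing care are verifying that the quantity bounded in Lemma \ref{Lemma:lemma2} really is the UA cost the algorithm incurs (each request assigned to its nearest available cache at arrival time), so that summing it with the caching bound genuinely reconstructs $\mathcal{P}_{on}$, and confirming that both coefficients are majorized by the single factor $4\log(n+1)+2$ before it is pulled out in front of $\mathcal{C}^{*}$.
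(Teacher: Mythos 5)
Your proposal is correct and follows essentially the same route as the paper's own proof: it combines Lemma~\ref{Lemma:lemma2} for the UA cost with Lemmas~\ref{Lemma:lemma4} and~\ref{Lemma:lemma5} for the caching cost, then uses $\log(n)\leq\log(n+1)$ to merge coefficients and factor $4\log(n+1)+2$ in front of $\mathcal{C}^{*}$. The only difference is cosmetic --- you make explicit the bookkeeping check that the two bounded quantities reconstruct the algorithm's total cost, which the paper leaves implicit.
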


\begin{proof}
From Lemma \ref{Lemma:lemma2}, we have
\begin{align}
\sum_{v \in V}d(W,v) &\leq \log(n+1)\sum_{w \in W^*}f_w \nonumber \\
&+ (2\log(n+1) + 1)\sum_{v \in V}d(W^*,v) \nonumber
\end{align}
and from Lemmas \ref{Lemma:lemma4} and \ref{Lemma:lemma5}, we have
\begin{displaymath}
\sum_{w \in W}f_w \leq (\log(n) + 1)\sum_{w \in W^*}f_w + (2\log (n) + 1)\sum_{v \in V}d(W^*,v)
\end{displaymath}

Combining the two bounds we get
\begin{align}
\sum_{w \in W}f_w + \sum_{v \in V}d(W,v) &\leq (2\log(n+1) + 1)\sum_{w \in W^*}f_w \nonumber \\
&+ (4\log(n+1) + 2)\sum_{v \in V}d(W^*,v) \nonumber \\
&\leq (4\log(n+1) + 2)\mathcal{C}^* \nonumber
\end{align}
\end{proof}

\subsection{Lower Bound}\label{subsec:LowerBound}
To prove the lower bound of the competitive ratio of any online algorithm under our settings, we measure the competitive ratio of the online algorithm against an oblivious adversary. For a deterministic algorithm, the adversary knows how the algorithm works, so the adversary can always generate an input sequence such that the deterministic algorithm performs worst on that input. For a randomized algorithm, the adversary knows the algorithm's code, but does not know the randomized result of the randomized algorithm, so the performance of the randomized algorithm is not worse than the performance of the deterministic algorithm against the same adversary. This means that a lower bound on the competitive ratio of the randomized algorithm is also a lower bound on the competitive ratio of the deterministic algorithm. In the next theorem, we show that the competitive ratio of any randomized online algorithm is lower bounded by $\Omega(\frac{\log(n)}{\log(\log(n))})$.

\begin{figure*}
\centering
\includegraphics[scale = 0.8]{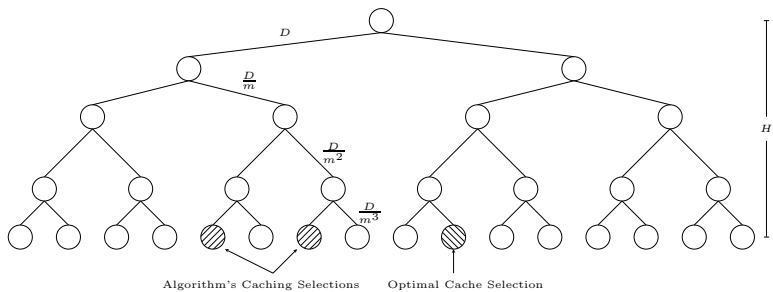}
\caption{Tree structure used in the proof of Theorem \ref{theorem:theorem3}.}
\label{fig:Tree}
\end{figure*}

The proof of the lower bound is done by showing an example, such that any online algorithm executed on this example cannot get a competitive ratio less than $\frac{\log(n)}{\log(\log(n))}$. Therefore, the best competitive ratio achieved by any online algorithm is $\frac{\log(n)}{\log(\log(n))}$. Moreover, the performance we consider is the worst-case performance, so one counter-example is sufficient.

\begin{theorem}\label{theorem:theorem3}
Under our settings, the best competitive ratio achieved by any randomized online algorithm against an oblivious adversary is lower bounded by $\Omega(\frac{\log(n)}{\log(\log(n))})$.
\end{theorem}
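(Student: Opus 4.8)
The plan is to establish the lower bound $\Omega(\frac{\log(n)}{\log\log(n)})$ via Yao's principle: since we are measuring any randomized online algorithm against an oblivious adversary, it suffices to exhibit a single fixed probability distribution over input sequences such that \emph{every} deterministic online algorithm incurs expected cost at least $\Omega(\frac{\log(n)}{\log\log(n)})$ times the optimal offline cost on that distribution. This reduces the problem to a hard instance together with a lower bound on the expected cost of deterministic algorithms. The tree structure referenced in \Figure{\ref{fig:Tree}} is presumably the backbone of this construction, so the first step is to define the hard instance precisely: a hierarchical (balanced tree) arrangement of base stations in which the UA costs $T_i^k$ correspond to distances in the tree metric (which automatically satisfies the triangle inequality required by our settings), together with an adversarial request sequence and a distribution over which leaf/subtree the requests concentrate in.

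Next I would set up the parameters so that the instance is \emph{self-similar across $\log n / \log\log n$ levels}. The idea is to pick a tree of branching factor $b \approx \log n$ and depth $D \approx \log n / \log\log n$, so that the total number of requests $n$ matches $b^D$ and the depth realizes the claimed ratio. The caching costs $f_k$ and edge lengths should be scaled geometrically with depth so that at each level an online algorithm faces a genuine dilemma: it does not know in advance which of the $b$ children of the current node will actually receive the bulk of future requests, so it either caches prematurely (paying caching cost on the wrong branches) or serves requests from a far-away cache (paying UA cost). The core combinatorial claim to prove is that at each level the expected cost paid by any deterministic algorithm is at least a constant (an $\Omega(1)$ per-level charge) while the optimal offline solution, which knows the realized branch, pays only a bounded total by caching exactly along the single realized root-to-leaf path. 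Multiplying the per-level charge by the number of levels $D \approx \log n/\log\log n$ yields the lower bound on the ratio.

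The concrete steps, in order, are: (1) define the tree metric, the costs $f_k$, and the random request sequence, verifying the triangle inequality and computing $n$ as a function of $b$ and $D$; (2) upper-bound $\mathcal{C}^*$ by analyzing the clairvoyant offline solution that caches along the realized path, showing $\mathcal{C}^* = O(D)$ or $O(1)$ after normalization; (3) lower-bound the expected deterministic cost by an inductive/level-by-level argument showing each level contributes $\Omega(1)$ in expectation regardless of the algorithm's choices, since the realized branch is uniform and independent of the algorithm's state; (4) invoke Yao's principle to transfer the deterministic lower bound to randomized algorithms against the oblivious adversary; and (5) solve $b^D = n$ with $b \approx \log n$ to extract $D = \Theta(\log n / \log\log n)$ and conclude.

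The main obstacle I anticipate is step (3): arguing rigorously that \emph{every} deterministic strategy, including adaptive ones that condition on the entire observed prefix, is forced to pay $\Omega(1)$ per level in expectation. The subtlety is that a clever algorithm might defer caching and gamble, or cache cheaply near the root to hedge, so the argument must carefully partition the cost into caching and UA components and show that any attempt to save on one component necessarily incurs the charge on the other, uniformly over the $b$ equiprobable continuations. Getting the geometric scaling of $f_k$ and the edge lengths to make this tradeoff tight at every level simultaneously — so that the per-level $\Omega(1)$ charges genuinely accumulate additively rather than being dominated by a single level — is the delicate part of the calibration, and it is where the choice $b \approx \log n$ (rather than a constant branching factor) becomes essential to pushing the depth, and hence the ratio, up to $\log n/\log\log n$.
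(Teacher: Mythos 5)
Your outline matches the paper's proof in skeleton --- Yao's principle, a hierarchical tree metric with geometrically scaled edge costs satisfying the triangle inequality, a uniformly random branch chosen at each level, and per-level charges accumulated over depth $\Theta(\frac{\log(n)}{\log\log(n)})$ --- but there is a genuine gap at exactly the step (3) you flag, and your specific claim that a branching factor $b \approx \log n$ is the essential lever is wrong (the paper's tree is \emph{binary}). With a constant number of requests per level and edge costs decaying geometrically, the forced per-level UA charge also decays geometrically: serving the level-$h$ requests from outside the current subtree costs at least the edge to the parent, which shrinks by the decay factor at every step, so the total forced UA charge is dominated by the top level and is $O(\mathcal{C}^*)$. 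The wasted-caching charge does not rescue the argument, because an algorithm need not cache at every level: it can cache once (or never) and, if the decay is fast, serve all later requests from a single early cache at total cost $O(\mathcal{C}^*)$; if the decay is slow, never caching becomes cheap or the offline optimum's own UA cost blows up. No choice of decay rate makes per-level $\Omega(1)$ charges accumulate in your parametrization, regardless of $b$. Note also that your own accounting $n = b^D$ is inconsistent with requests concentrating on a single root-to-leaf path at $O(1)$ per level, which would give $n = \Theta(D)$; making $n$ exponential in the depth already forces geometrically growing request multiplicities, which is precisely the ingredient you are missing.

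The paper closes this hole with two devices. First, the number of requests per phase \emph{grows} geometrically: phase $h$ consists of $m^h$ requests at $z_h$ while edge costs shrink by the same factor $m$, so the two cancel and serving phase $h$ from outside $\mathcal{T}_{z_h}$ costs at least $m^{h}\cdot D/m^{h-1} = mD$ --- a constant charge at every level --- while the optimum (which caches a \emph{single} copy at the final vertex $z_H$, not along the whole realized path as you propose) pays at most $\frac{m}{m-1}D$ per phase plus one caching cost $f$. Choosing $m = H$ and $f = HD$ gives a per-phase expected charge of $\min\{mD, f/2\} = \Theta(HD)$ against a total optimal cost of $\Theta(HD)$, hence ratio $\Omega(H)$; and it is this multiplicity growth, not the branching factor, that caps the depth, since $n \approx m^{H} = H^{H}$ forces $H = \Theta(\frac{\log(n)}{\log\log(n)})$. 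Second, the hedging strategy you worry about (``cache cheaply near the root'') is eliminated by fiat: internal vertices are assigned infinite caching cost, so only leaves can cache, and any copy placed before the random path is resolved lands in the wrong sibling subtree with probability at least $1/2$, wasting $f$. Without these two ingredients your per-level $\Omega(1)$ claim is not provable, so the proposal as written does not go through.
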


\begin{proof}
We prove the theorem through an example. Let $\mathcal{T}$ be a complete binary tree of height $H$ such that:
\begin{itemize}
\item Each vertex represents a base station.
\item Each edge between two vertices represent the UA cost between the corresponding directly-connected base stations.
\item The UA cost from the root to each of its children is $D$.
\item On every path from the root to a leaf, the cost drops by a factor of $m$ on every step.
\item The UA cost from vertex $i$ to vertex $k$ is the aggregated cost of the edges from $i$ to $k$.
\item The caching cost of every non-leaf vertex is set to infinity, while the caching cost of the leaves is set to $f$.
\end{itemize}
The height of a vertex is the number of edges on the path to the root. The cost from a vertex of height $h$ to each of its children is $\frac{D}{m^{h}}$. The tree is shown in Fig. \ref{fig:Tree}.

For a vertex $z$, let $\mathcal{T}_{z}$ denote the subtree rooted at vertex $z$. We observe the following:
\begin{itemize}
\item The cost from a vertex of height $h$ to any vertex in $\mathcal{T}_{z}$ is at most $\frac{mD}{(m-1)m^{h}}$, which is the UA cost from vertex $z$ to a leaf in $\mathcal{T}_{z}$.
\item The cost from a vertex $z$ of height $h$ to any vertex not in $\mathcal{T}_{z}$ is at least $\frac{D}{m^{h-1}}$, which is the cost from a vertex $z$ to its parent.
\end{itemize}

By Yao's principle \cite{bartal1996probabilistic}, it suffices to show that there is a probability distribution over the request sequence, for which the ratio of the expected cost of any deterministic online algorithm to the expected optimal cost is $\Omega(\frac{\log(n)}{\log(\log(n))})$.

To define an appropriate probability distribution, we divide the request sequence into $H+1$ phases. Phase 0 consists of 1 request located at the root. After the end of phase $h, 0 \leq h \leq H$, if $z_{h}$ is not a leaf, the adversary selects $z_{h+1}$ uniformly at random and independently between the two children of $z_{h}$. Phase $h+1$ consists of $m^{h+1}$ requests located at $z_{h+1}$.

The total number of requests is at most $\frac{m}{m-1}m^H$, which must not exceed $n$. The optimal solution is to cache the content at $z_H$, and each phase $h$ except for the last one, incurs a UA cost of at most $\frac{mD}{m-1}$. Therefore, the optimal total cost is at most $f + H\frac{mD}{m-1}$.

Now, let $Alg$ be any deterministic online algorithm, and let $h, 0 \leq h \leq H-1$, be any phase except for the last one. We fix the adversary's random choices $z_0, \ldots, z_{h}$ up to the end of phase $h$, and consider the expected cost paid by $Alg$ for requests and caches not in $\mathcal{T}_{z_{h+1}}$.

If $Alg$ did not cache the content at any cache located in $\mathcal{T}_{z_{h}}$ at the moment the first requests in $z_{h+1}$ arrives, then the content was cached at a cache located in $\mathcal{T} / \mathcal{T}_{z_{h}}$ from a previous phase. Therefore, the UA cost for the requests located at $z_{h} \in \mathcal{T}_{z_{h}} / \mathcal{T}_{z_{h+1}}$ is at least $\frac{m^{h}D}{m^{h-1}} = mD$, since these requests has to retrieve the content by going through the parent $z_h$. Otherwise, since $z_{h+1}$ is selected uniformly at random and independently between $z_{h}$'s children, then, with a probability of at least $1/2$, there is at least one cache located in $\mathcal{T}_{z_{h}} / \mathcal{T}_{z_{h+1}}$ that cached the content. Therefore for every fixed choice of $z_0, \ldots, z_{h}$, the expected cost paid by $Alg$ for requests and caches not in $\mathcal{T}_{z_{h+1}}$ is at least $\min\{mD, f/2\}$, in addition to the costs for requests and caches not located in $\mathcal{T}_{z_{h}}$.

Hence, at the beginning of phase $h, 0 \leq h \leq H$, the expected cost paid by $Alg$ for requests and caches not in $\mathcal{T}_{z_{h}}$ is at least $h\min\{mD, f/2\}$. For the last phase, $Alg$ incurs an additional cost of at least $\min\{mD, f\}$ for requests and caches in $\mathcal{T}_{z_{H}}$.

For $m = H$ and $f = HD$, the total expected cost of $Alg$ is at least $h\min\{HD, HD/2\} + min\{HD, HD\} \leq h\frac{HD}{2} + HD = HD\frac{h+2}{2}$, while the optimal cost is at most $HD\frac{2H - 1}{H - 1}$. Hence the competitive ratio is lower bounded by $\Omega(H)$. We also have the constraint that the total number of requests, which is at most $\frac{H^{H+1}}{H-1}$ must not exceed $n$. Setting $H = \lfloor{\frac{\log (n)}{\log\log(n)}}\rfloor$ yields the claimed lower bound (see the Appendix).
\end{proof}

\section{Simulation Results}\label{sec:Simulation}
\subsection{Settings}
In this section, we compare three caching schemes: the online collaborative scheme described in Section \ref{sec:Online}, the optimal offline collaborative caching scheme represented by the ILP formulation described in Section \ref{sec:Formulation} and computed using CPLEX \cite{cplex}, and the optimal offline non-collaborative caching scheme, which does not allow collaboration among the base stations, and can be formulated in a similar way to our ILP formulation.

\begin{figure*}
\centering
\includegraphics[scale = 0.45]{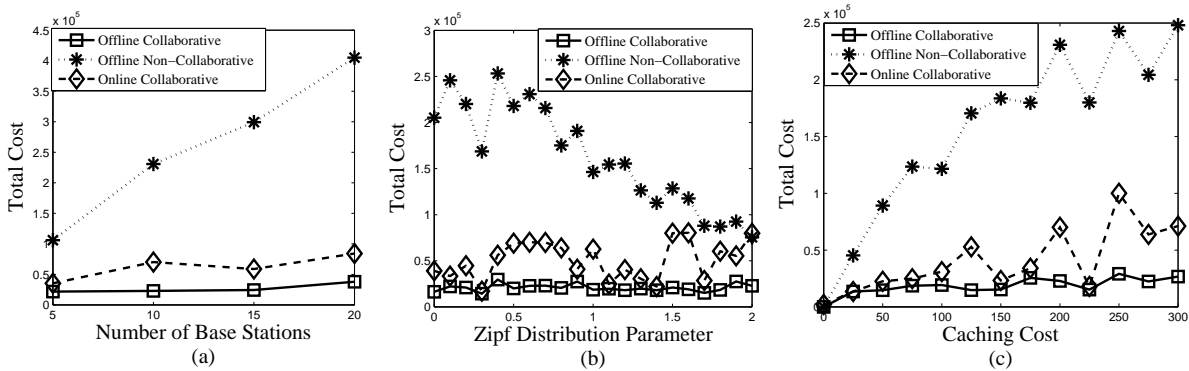}
\caption{Total cost of all schemes.}
\label{fig:Fig1}
\end{figure*}

The simulations are run on a random topology, where the base stations are uniformly distributed in a square area of size $50\times 50$ square kilometers and there is a link between two base stations if the distance between them is less than a certain threshold. If we associate a UA cost between any two directly connected base stations, then for any two base stations $i, k$, $T_{ij}^{k}$ can be computed using the path with the minimum UA cost between $i$ and $k$. We set the number of contents $M$ to 20, where each content has a size chosen uniformly at random from the set \{10, 11, \ldots, 20\}MB. The popularity of each content at each base station is chosen according to a Zipf distribution \cite{breslau1999web}, with parameter $\zeta$, where the popularity of a content of rank $j$ is given as $\frac{1/j^\zeta}{\sum_{m=1}^M 1/m^\zeta}$. We assume that content ranking in a base station is different and independent from the ranking in other base stations (\emph{i.e.} content $j$ may be ranked first in a base station but ranked fifth in another base station). The results in all of the figures are the average of 100 runs.

\subsection{Results with Accurate Estimation of Content Popularities}
We study the effect of changing different parameters on the cost of all schemes. In Fig. \ref{fig:Fig1}(a), we study the effect of changing the number of base stations. As can be observed from the figure, the cost of the non-collaborative scheme is at least 4 times and 3 times that of the cost of the offline and the online collaborative schemes, respectively. We also observe that the offline non-collaborative scheme cost increases at a higher rate than the collaborative schemes when we increase the number of base stations. The reason behind the above observations is that a base station in the non-collaborative scheme has to retrieve the content from the Internet, if the base station did not cache the content. On the other hand, the base station in the collaborative scheme can retrieve the content from a nearby base station that has the requested content. This shows the scalability of the collaborative caching schemes. Therefore, it is crucial to enable collaboration among the base stations, which is not done in most of the previous work as described above.

In Fig. \ref{fig:Fig1}(b), we study the effect of changing the Zipf distribution parameter $\zeta$. First, we observe that the cost of the non-collaborative scheme is at least 3 times the cost of the offline collaborative scheme, and varies between 1 to 3 times the cost of the online collaborative scheme, as the non-collaborative scheme cannot retrieve a content from a nearby base station. Second, we observe that the offline non-collaborative scheme cost decreases as $\zeta$ increases. This is because as $\zeta$ increases, less number of contents are requested more often, and the non-collaborative scheme caches the most popular files. Third, we observe that the offline collaborative scheme cost does not change much as $\zeta$ increases. This is because the offline collaborative scheme can retrieve contents from nearby base stations.

In Fig. \ref{fig:Fig1}(c), we study the effect of changing the average caching cost. First, we observe that the cost of the non-collaborative scheme is at least 2 times the cost of both of the offline and the online collaborative schemes. Second, we observe that the total cost of the non-collaborative scheme increases at a higher rate than the collaborative schemes as the average caching cost increases. The reason behind both observations is that the non-collaborative scheme does not allow content retrieval from nearby base stations, which means that each content may be cached at more than one base station, which increases the total caching cost. On the other hand, the collaborative schemes tend to cache each content at a single base station, and all other base stations can retrieve the content without caching an additional copy of the content or retrieve it from the Internet.

Moreover, in Fig. \ref{fig:Fig1}(c), we start with a caching cost that is less than the attrition cost (i.e. $f_{kj} < T_{ij}^{k}$), then we increase the caching cost until it becomes larger than the attrition cost (i.e. $f_{kj} > T_{ij}^{k}$). We can see a tradeoff between the caching cost and the attrition cost in this figure for the non-collaborative caching scheme. In the non-collaborative scheme, the content is either cached or retrieved from the Internet. When the caching cost is low, all base stations tend to cache the content and thus achieving low cost. As the caching cost increases, the base stations tend to retrieve the contents from the Internet instead of caching. However, retrieving from the Internet is costly. On the other hand, when the caching cost is large, the collaborative caching scheme tends to retrieve from nearby base stations. Thus the collaborative caching scheme achieves lower cost than the non-collaborative caching scheme.

From all of the plots in Fig. \ref{fig:Fig1}, we note that enabling collaboration among base stations has a significant impact on the cost reduction. We also note that the cost of the online collaborative scheme is very close to the cost of the optimal offline collaborative scheme, with a maximum degradation of three folds, and that the online collaborative scheme can achieve a cost reduction of four folds over the cost of the offline non-collaborative scheme.

In Fig. \ref{fig:Fig2}, we study the impact of increasing the average number of users at each base station on the cost of each scheme. As can be seen from the figure, the cost of the online collaborative scheme is less than that of the cost of the non-collaborative caching scheme. This is due to the collaborative property of the online collaborative scheme, where a content can be retrieved from a nearby base station.

\begin{figure}
\centering
\includegraphics[scale = 0.4]{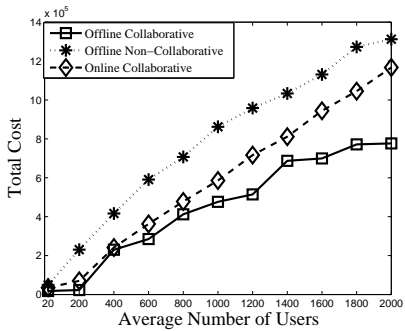}
\caption{Total cost of all schemes vs average number of users.}
\label{fig:Fig2}
\end{figure}

\begin{figure}
\centering
\includegraphics[scale = 0.4]{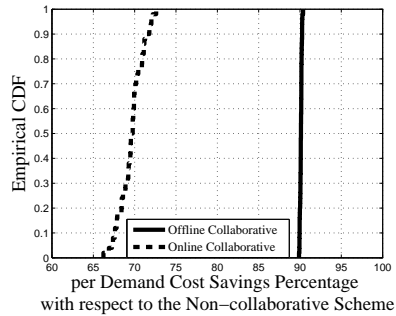}
\caption{The empirical CDF of the per demand cost savings percentage with respect to the non-collaborative scheme.}
\label{fig:cdf1}
\end{figure}

In Fig. \ref{fig:cdf1}, we measure the per demand cost savings percentage of the collaborative schemes with respect to the non-collaborative scheme. Here, we measure the cost of the collaborative schemes for 100 different sets of demands. For each set of demand, we normalize the cost of the collaborative schemes with respect to the cost of the non-collaborative scheme, and then subtract it from 1. Denote the cost of the offline collaborative and the offline non-collaborative for the $r$-th set of demands as $\mathcal{C}_{col}(r)$ and $\mathcal{C}_{non}(r)$, respectively. We compute the per demand cost savings as $R_{col}(r) = (1 - \frac{\mathcal{C}_{col}(r)}{\mathcal{C}_{non}(r)})\times 100\%$. After that, the empirical CDF of the vector $[R_{col}(1), R_{col}(2), \ldots, R_{col}(100)]$ for the 100 sets of demands is plotted. We do the same process for the online collaborative scheme. From the figure, we observe that the relative cost savings between the offline collaborative scheme and the online collaborative scheme is similar among all sets of demands. We also observe that the online collaborative cost savings varies between 65\% to 75\%. The reason is that the non-collaborative scheme cannot retrieve cached contents from another base station.

\subsection{Results with Errors in Estimating the Contents Popularities}

\begin{figure*}
\centering
\includegraphics[scale = 0.45]{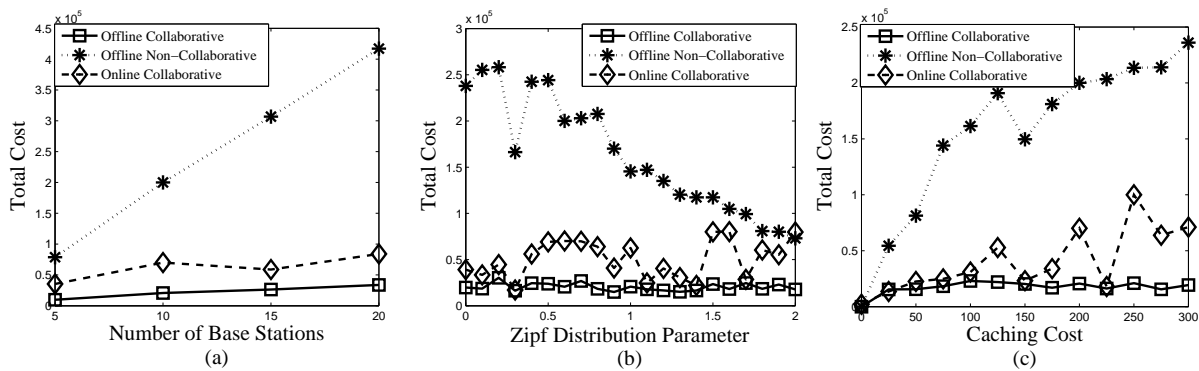}
\caption{Total cost of all schemes with 50\% error margin in popularity estimation.}
\label{fig:Fig3}
\end{figure*}

In Fig. \ref{fig:Fig3}, we repeat the simulations as in Fig. \ref{fig:Fig1}, when a 50\% error margin is introduced to the popularity estimation. Formally speaking, we generate two sets of requests, the estimated requests set and the actual requests set, where the estimated requests $\hat{\gamma}_{ij}$ are chosen randomly from a uniform distribution in the range $[0.5\gamma_{ij}, 1.5\gamma_{ij}]$. We use the set of estimated requests to solve the collaborative and the non-collaborative caching optimization problems, then we calculate the total cost based on the answer of the optimization problem and the set of actual requests. In the online collaborative scheme, the total cost is calculated using the actual requests, since the online collaborative scheme works when a new request for a content arrives at a base station.

In Fig. \ref{fig:Fig3}(a), we study the effect of changing the number of base stations. As can be observed from the figure, the offline and online collaborative schemes can achieve a cost reduction of at least 500\% and 100\% over the cost of the non-collaborative scheme, respectively. We also observe that the offline non-collaborative scheme cost grows at a higher rate than the collaborative schemes as the number of base stations increases. The reason is that an error in estimating the contents popularities may cause the non-collaborative scheme not to cache the correct contents, in which case the contents are retrieved from the Internet. On the other hand, the collaborative scheme can retrieve the contents from a nearby base station that has the requested contents.

\begin{figure}
\centering
\includegraphics[scale = 0.4]{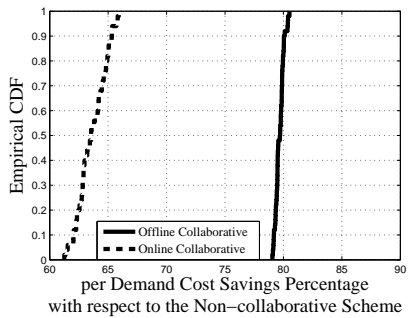}
\caption{The empirical CDF of the per demand cost savings percentage with respect to the non-collaborative scheme with 50\% error margin in popularity estimation.}
\label{fig:cdf2}
\end{figure}

\begin{figure}
\centering
\includegraphics[scale = 0.4]{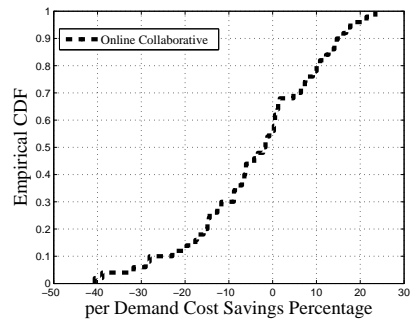}
\caption{The empirical CDF of the per demand cost savings percentage of the online collaborative scheme with respect to the offline collaborative scheme with 50\% error margin in popularity estimation.}
\label{fig:cdf3}
\end{figure}

In Fig. \ref{fig:Fig3}(b), we study the effect of changing the Zipf distribution parameter $\zeta$. First, we observe that the cost of the non-collaborative scheme is increased by at least 3-fold compared to the cost of the offline collaborative scheme, and the cost of the non-collaborative scheme can increase up to 10-fold compared to the cost of the online collaborative scheme. This is because the non-collaborative scheme cannot retrieve a content from a nearby base station. Second, we observe that the offline non-collaborative scheme cost decreases as $\zeta$ increases. This is because as $\zeta$ increases, less number of contents are requested more often, and the non-collaborative scheme caches the most popular files. Last, we observe that the offline collaborative scheme cost does not change much as $\zeta$ increases. This is because the offline collaborative scheme can retrieve contents from nearby base stations.

\begin{figure*}
\centering
\includegraphics[scale = 0.45]{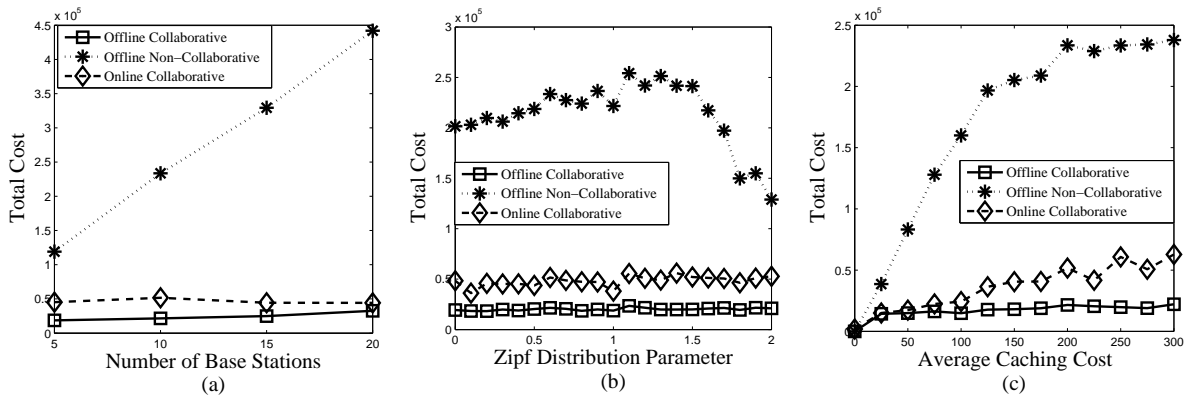}
\caption{Total cost of all schemes with errors in content ranking estimation.}
\label{fig:Fig4}
\end{figure*}

In Fig. \ref{fig:Fig3}(c), we study the effect of changing the average caching cost. First, we observe that the cost of the non-collaborative scheme is at least 3 times the cost of both of the offline and the online collaborative schemes. Second, we observe that the total cost of the non-collaborative scheme increases at a higher rate than the collaborative schemes as the caching cost increases. The reason behind both observations is that the non-collaborative scheme prohibits retrieving the contents from nearby base stations, which means that each content may be cached at more than one base station, which increases the total caching cost. On the other hand, the collaborative schemes tend to cache each content at a single base station, and all other base stations can retrieve the content without caching an additional copy of the content or retrieve it from the Internet.

In Fig. \ref{fig:cdf2}, we repeat the simulation as in Fig. \ref{fig:cdf1}, when a 50\% error margin is introduced to the popularity estimation. We observe that due to the error margin in popularity estimation, there is a small variance in the cost savings of the offline collaborative scheme over the offline non-collaborative scheme. We also observe that the online collaborative scheme cost savings varies between 71\% to 76\% over the cost of the non-collaborative scheme in all sets of demands.

Fig. \ref{fig:cdf3} shows the per demand cost savings percentage of the online collaborative schemes with respect to the offline collaborative scheme when the number of base stations is set to 10, the Zipf distribution parameter $\zeta$ is set to 1.1, the average caching cost is set to 200, and a 50\% error margin in popularity estimation is introduced. The per demand cost savings percentage is calculated in a similar manner to that in Fig. \ref{fig:cdf1}. From the figure, we note that the online collaborative scheme can outperform the offline collaborative scheme in 40\% of the demand sets, and it can achieve around 22\% of cost savings over the offline collaborative scheme. This is because the offline collaborative scheme may consider an unpopular content to be popular due to the inaccuracies in estimating the contents popularities. On the other hand, the online collaborative scheme is not affected by the inaccuracies in estimating content popularities, as it makes a caching decision when a new request for a content arrives at a base station.

In Fig. \ref{fig:Fig4}, we repeat the simulations as in Fig. \ref{fig:Fig1} when we introduce errors in estimating the contents ranking. The results are averaged over 20 runs. In Zipf popularity distribution, lower ranking number means higher popularity. To introduce the errors in estimating contents ranking, we generate two sets of requests, the estimated requests set and the actual requests set. The estimated requests set is generated by changing the contents ranking at each base station randomly. For example, if the actual ranking of a content at a base station is fifth, then the estimated ranking of the same content at the same base station may be first.

From the figure, we note that the total cost of the non-collaborative scheme varies between 2 times to 9 times of the total cost of the collaborative schemes as the number of base station increases (Fig. \ref{fig:Fig4}(a)). As we change the Zipf distribution parameter, the total cost of the non-collaborative scheme is at least twice the total cost of the collaborative schemes (Fig. \ref{fig:Fig4}(b)). Finally, as the average caching cost increases, the total cost of the non-collaborative scheme varies between 100\% to 400\% of the total cost of the collaborative schemes (Fig. \ref{fig:Fig4}(c)). The reason behind all these observation is that the collaborative schemes have a lower total UA cost since the collaborative schemes can retrieve the content from a nearby base station.

Fig. \ref{fig:cdf4} is similar to Fig. \ref{fig:cdf2} when we introduce errors in estimating the contents ranking. The results are for 20 different sets of demands. From the figure, we note that the relative cost savings of the offline collaborative scheme varies between 87\% to 92\% over all sets of demands, while the relative cost savings of the online collaborative scheme varies between 62\% to 92\% over all sets of demands.

In Fig. \ref{fig:cdf5}, we measure the relative cost savings of the online collaborative scheme to the offline collaborative scheme over 20 different sets of demands, when we introduce errors in estimating the contents ranking. From the figure, we note that in 30\% of the demand sets, the total cost of the online collaborative scheme is less than the total cost of the offline collaborative scheme. This is mainly due to the inaccuracies in estimating the contents ranking.

\section{Conclusion}\label{sec:Conclusion}
In this paper, we study the problem of content caching in a collaborative multicell-coordinated system, with the objective of minimizing the total costs paid by the content providers. We formulate the problem of collaborative caching as an optimization problem, and we prove that it is NP-complete. We also provide an online algorithm for the problem. The online algorithm does not require any knowledge about the content popularities. Through extensive simulations, we show that the collaborative caching schemes provide higher savings than the non-collaborative caching scheme, which means that applying the simple online algorithm is better than solving the non-collaborative optimization problem. The simulations also show that our online caching scheme can also outperform the optimal offline collaborative scheme when there are inaccuracies in estimating the contents popularities.

\begin{figure}
\centering
\includegraphics[scale = 0.4]{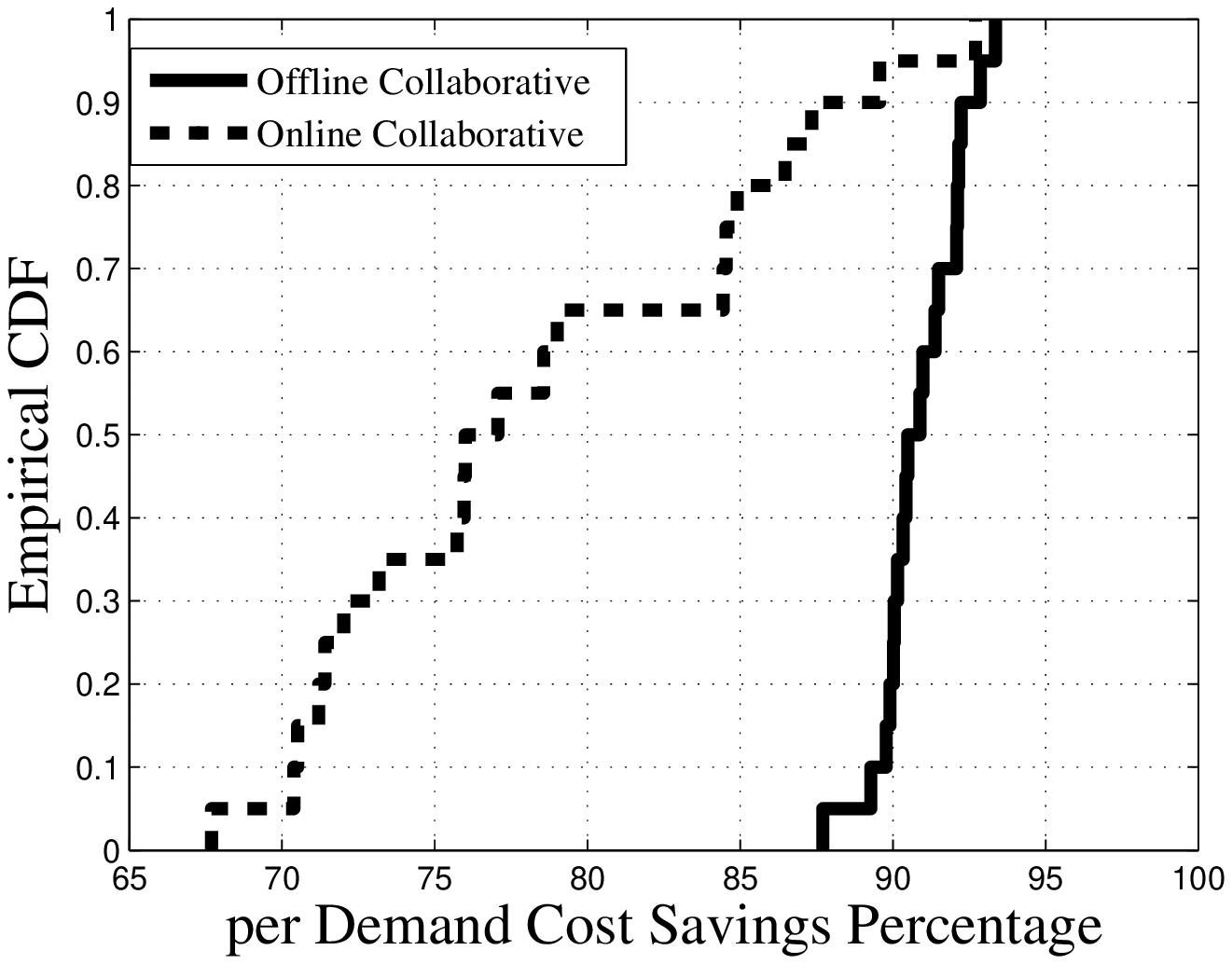}
\caption{The empirical CDF of the per demand cost savings percentage of the online collaborative scheme with respect to the offline collaborative scheme with errors in content ranking estimation.}
\label{fig:cdf4}
\end{figure}

\begin{figure}
\centering
\includegraphics[scale = 0.4]{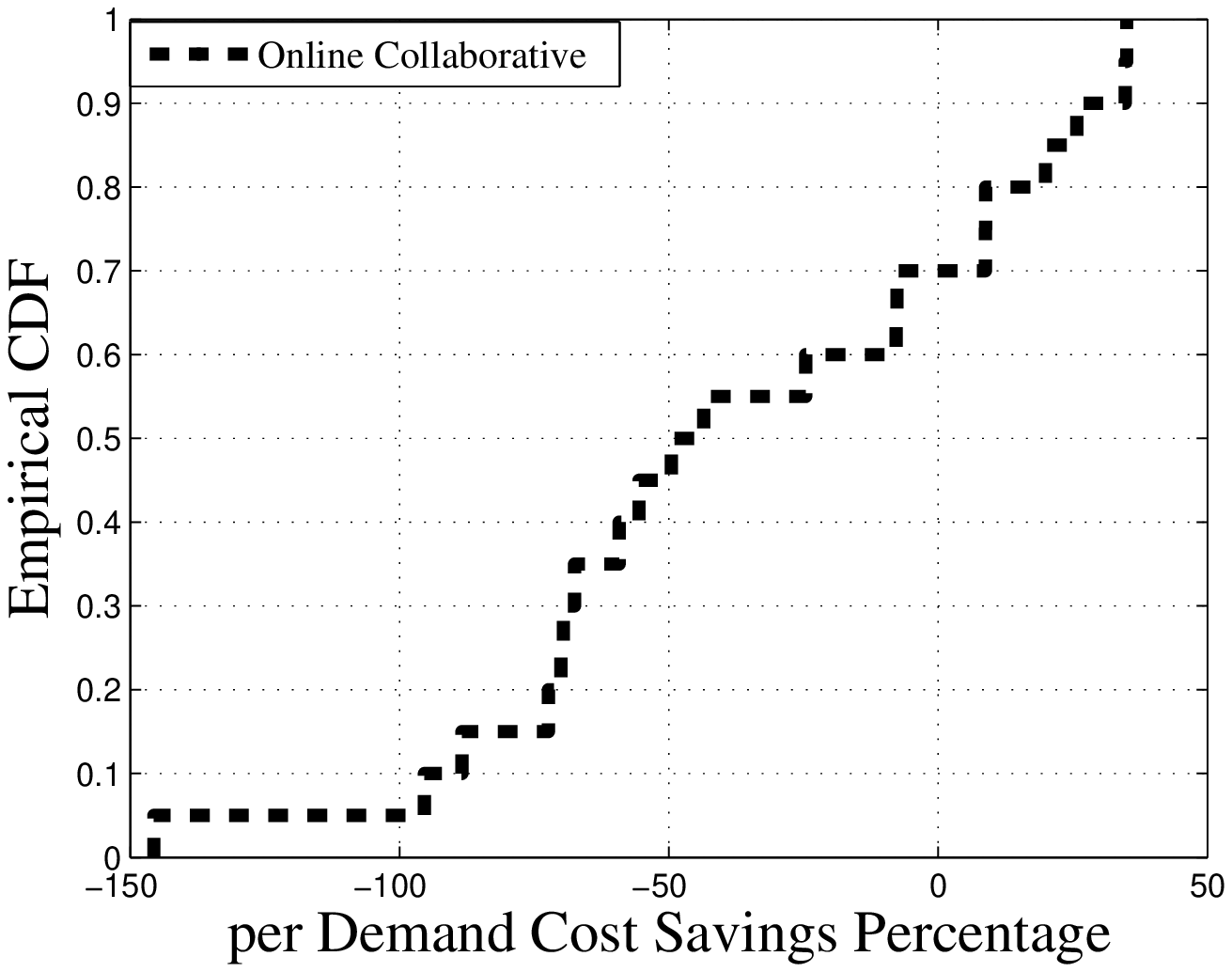}
\caption{The empirical CDF of the per demand cost savings percentage of the online collaborative scheme with respect to the offline collaborative scheme with errors in content ranking estimation.}
\label{fig:cdf5}
\end{figure}

\section*{Appendix}
To show that setting $H = \frac{\log(n)}{\log\log(n)}$ will satisfy the inequality $\frac{H^{H+1}}{H-1} \leq n$, we show that setting $H = \frac{\log(n)}{\log\log(n)}$ will satisfy $\frac{H^{H+1}}{H/2} \leq n$.
\begin{align}
\leq \frac{H^{H+1}}{H/2} &\leq n \nonumber\\
2H^H &\leq n \nonumber\\
H + H\log(H) &\leq \log(n) \nonumber
\end{align}
Setting $H = \frac{\log(n)}{\log\log(n)}$ yields
\begin{align}
\frac{\log(n)}{\log\log(n)} + \frac{\log(n)}{\log\log(n)} \log(\frac{\log(n)}{\log\log(n)}) &\leq \log(n) \nonumber\\
1 + \log(\frac{\log(n)}{\log\log(n)}) & \leq \log\log(n) \nonumber\\
1 + \log\log(n) - \log\log\log(n) &\leq \log\log(n) \nonumber
\end{align}
The last inequality holds when $n \geq 16$.

\bibliography{biblo}
\bibliographystyle{IEEEtran}




\end{document}